\newcommand{\iin}{\rm{in}}
\newcommand{\R}{{\mathbb R}}
\newcommand{\PP}{{\mathbb P}}
\newtheorem{defn}{Definition}
\newtheorem{cor}{Corollary}
\newtheorem{thm}{Theorem}
\renewenvironment{proof}{\noindent{\bf Proof:} }{\hfill $\square$ \\}
\begin{document}
\setlength{\baselineskip}{1.3 \baselineskip}

\title{Discrete multi-colour random mosaics with an application to network extraction}
\author{M.N.M.\ van Lieshout \\ \mbox{} 
\\ CWI \\ Amsterdam, The Netherlands}
\maketitle
\begin{verse}
{\footnotesize
\noindent
We introduce a class of random fields that can be understood as discrete versions of 
multi-colour polygonal fields built on regular linear tessellations. We focus first 
on consistent polygonal fields, for which we show Markovianity and solvability by 
means of a dynamic representation. This representation forms the basis for new 
sampling techniques for Gibbsian modifications of such fields, a class which covers
lattice based random fields. A flux based modification is applied to the extraction 
of the field tracks network from a SAR image of a rural area.
\\
\noindent
{\em 2010 Mathematics Subject Classification:}
60D05, 60G60.
\newline
{\em Keywords:}
consistent polygonal field, dynamic representation, linear network extraction,
random field.
}
\end{verse}

\large{
\begin{center}
In memory of Tomasz F.\ Schreiber.
\end{center}
}

\section{Introduction} 

In the 1980s, Arak and Surgailis introduced a class of planar Markov 
fields whose realisations form a coloured tessellation of the plane. 
The basic idea is to use the lines of an isotropic Poisson line 
process as a skeleton on which to draw polygonal contours with the 
restriction that each line cannot be used more than once. Note that 
many tessellations can be drawn on the same skeleton and that the 
contours may be nested. The polygons are then coloured randomly 
subject to the constraint that adjacent ones must have different colours. 
Formally, the probability distribution of such polygonal Markov fields 
is defined in terms of a Hamiltonian with respect to the law of the
underlying Poisson line process, which can be chosen in such a way 
that many of the basic properties of the Poisson line process (including 
consistency, Poisson line transects, and an explicit expression for its 
probability distribution on the hitting set of bounded domains) carry 
over and a Markov property holds.  Another useful feature is that 
a dynamic representation in terms of a particle system is available. See 
\cite{ArakSurg89,ArakSurg91} for further details and \cite{ArakClifSurg93}
or \cite{Thal11} for alternative point rather than line based models.
The special case where all vertices have degree three was studied in
\cite{MackMile02}.

The simplest and most widely studied example of a planar Markov field 
is the Arak process \cite{Arak82} which consists of self-avoiding 
closed polygonal contours. Hence interaction is restricted to a hard
core condition on the contours, and there are exactly two colourings 
using the labels `black' and `white' such that adjacent polygons 
have different colours. For the Arak process, the Hamiltonian is 
proportional to the total contour length by a factor two. One may 
introduce further length-interaction by changing the proportionality 
constant. Doing so, Nicholls \cite{Nich01} and Schreiber \cite{Schr06} 
consider the separation between the black and white regions as the 
interaction gets stronger; Van Lieshout and Schreiber \cite{LiesSchr07} 
develop perfect simulation algorithms for these models. For the more 
general case in which both length and area terms feature in the 
Hamiltonian, Schreiber \cite{Schr05} proposes a Metropolis--Hastings
scheme based on the dynamic representation of the Arak process, which 
Kluszczy\'nski et al. \cite{KlusLiesSchr05} adapt and implement to 
solve foreground/background image segmentation problems. An anisotropic 
Arak process can be defined through local activity functions instead of
the length functional \cite{Schr08,Schr10} and allows for increased
flexibility while preserving desirable basic properties including a
dynamic representation. This representation forms the basis of a 
stochastic optimisation algorithm in the context of image segmentation, 
implemented by Matuszak and Schreiber \cite{MatuSchr12}, and helps to
gain insight into the higher order correlation structure \cite{Schr08}.

Polygonal Markov field models with contours that may also be joined by a 
vertex of degree three or four \cite{ArakSurg89,ArakSurg91} are much less 
well-understood due to the more complicated interaction structure. Papers 
in this direction include \cite{ClifNich94,KlusLiesSchr07,PaskThru05}.  

In a previous paper \cite{SchrLies10}, we introduced a class of binary
random fields that can be understood as discrete counterparts 
of the two-colour Arak process. The aim of the present paper is to 
extend this construction to allow for an arbitrary number of colours
and to relax the assumption in \cite{SchrLies10} that no polygons of the
same colour can be joined by corners only. Our construction is two-staged: 
first a collection of lines is fixed to serve as a skeleton for drawing
polygonal contours (a regular lattice being the generic example), then 
the resulting polygons are coloured in such a way that adjacent ones do
not have the same colour. The analogy with continuum polygonal Markov 
fields is exploited to define Hamiltonians that are such that the desirable 
properties of these processes mentioned above hold. Moreover, we propose 
new simulations techniques that combine global changes with the usual 
local update methods employed for random fields on finite graphs 
\cite{Wink03}. It should be stressed, though, that the discrete models
considered in this paper are not versions of continuum polygonal Markov
fields conditioned on having their edges fall along a given finite 
collection of lines.

The plan of this paper is as follows. In Section~\ref{S:definition} we 
define a family of admissible multi-colour polygonal configurations built 
on regular linear tessellations, and define discrete polygonal fields 
with special attention to consistent ones. In Section~\ref{S:dynamics} 
we present a dynamic representation of consistent polygonal fields, which 
is used to prove the main properties of such models. In Section~\ref{S:sampler}, 
we exploit the dynamic representation to develop a simulation method for 
consistent polygonal fields. The method is generalised to arbitrary Gibbs 
fields with polygonal realisations and applied to the detection of linear 
networks in images in Section~\ref{S:application}. We conclude with a 
discussion.

\section{Random fields with polygonal realisations}
\label{S:definition}

First, recall the definition of a regular linear tessellation from 
\cite{SchrLies10}.

\begin{defn}
\label{D:T}
A {\rm regular linear tessellation} of the plane is a countable family 
${\cal T}$ of straight lines in ${\R}^2$ such that no three lines of 
${\cal T}$ intersect at one point and such that any bounded subset of 
the plane is hit by at most a finite number of lines from ${\cal T}$.
\end{defn}

For a bounded open convex set $D$, ${\cal T}$ induces a partition of $D$ 
into a finite collection $D_{\cal T}$ of convex cells of polygonal shapes, 
possibly chopped off by the boundary. Below we shall always assume that 
the boundary $\partial D$ of $D$ contains no intersection points of lines 
from ${\cal T}$ and that the intersection of each line $l \in {\cal T}$ 
with $\partial D$ consists of exactly two points. To each line $l$, we 
ascribe a fixed activity parameter $\pi_l \in (0,1)$ to allow for 
the possibility to favour some lines over others. 

The next step is to assign a colour to each of the convex cells in
the partition of $D$ induced by the lines in ${\cal T}$. Write
$\{ 1, \dots, k \}$ for the set of colour labels. In this paper, we 
concentrate on the case that $k > 2$. Such a colouring gives rise to a 
graph whose edges are formed by the boundaries between cells that have 
been assigned different colours. For technical convenience, we shall 
assume that edges are open, that is, they do not contain the vertices 
in which they intersect. Faces of the graph, which are unions of cells 
of $D_{\cal T}$, are said to be adjacent if they share a common edge. 

\begin{defn}
The family $\hat \Gamma_D({\cal T})$ of {\rm admissible coloured
polygonal configurations} in $D$ built on ${\cal T}$ consists of all 
coloured planar graphs $\hat \gamma$ in the topological closure 
$\overline{D} = D \cup \partial D$ of $D$  such that
\begin{itemize}
\item all edges lie on the lines of ${\cal T}$;
\item all interior vertices,  i.e. those lying in $D$,
      are of degree $2$, $3$ or $4$; 	
\item all boundary vertices, i.e. those lying on $\partial D$, 
      are of degree $1$; 
\item no adjacent faces share the same colour.
\end{itemize}
\end{defn}

Throughout this paper, the notation $\gamma$ is used for (admissible) planar 
graphs and the hat notation $\hat \gamma$ for the graph with colours assigned 
to its faces. In this notation, $\Gamma_D({\cal T})$ stands for the family
of all planar graphs $\gamma$ arising as interfaces between differently coloured 
faces in $\hat{\gamma} \in \hat{\Gamma}_D({\cal T})$. Note that for the 
case $k=2$ treated in \cite{SchrLies10}, all interior vertices have degree two. 
Vertices of degree two are also known as V-vertices, those of degree three 
as T-vertices, and vertices of degree four as X-vertices.

To avoid confusion, it is important to distinguish between $D_{\cal T}$ and
the members of $\Gamma_D({\cal T})$, even though they all partition $D$. 
In the sequel we shall reserve the terms `segments' and `nodes' for the former,
'edges' and 'vertices' for the latter. Thus, nodes are intersection points of 
lines in ${\cal{T}}$, which are joined by segments. Edges of $\gamma$ are maximal 
unions of connected collinear segments that are not broken by other such
segments lying on the graph (corresponding to vertices of degree
three or four). Primary edges are maximal unions of connected collinear 
segments, possibly consisting of multiple edges due to T- or X-vertices.
Likewise, a vertex of $\gamma$ is a point where two edges of $\gamma$ meet or 
where an edge of $\gamma$ meets the boundary $\partial D$, nodes lying on the 
interior of graph edges are not considered to be vertices. These concepts are
illustrated in Figure~\ref{F:graphs}. The family ${\cal T}$ consists of two
orthogonal line bundles and induces diamond shaped cells. One element
of $\Gamma_D({\cal T})$ is plotted in Figure~\ref{F:graphs}. Consider the
polygonal face indicated by `C' that is chopped off by the boundary. There are 
$2$ boundary vertices and $12$ interior vertices, $5$ of degree $2$, $7$ of 
degree $3$ and none of degree $4$. We count $17$ nodes on $16$ complete segments, 
with $2$ segments partly visible due to truncation by the boundary. $C$ has $13$ 
edges and its boundary lies on $8$ primary edges.
Note that in the literature, primary edges are sometimes also known as sides.  
See e.g.\ \cite{WeisCowa11} for a discussion on nomenclature for tessellations. 

\begin{figure}
\begin{center}
  \fbox{\includegraphics[height=40mm]{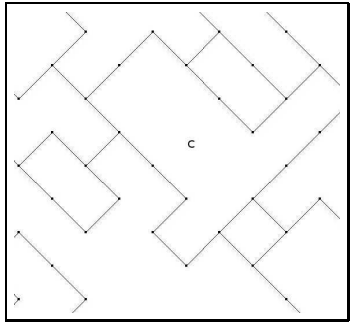}}
\end{center}
\caption{Admissible polygonal configuration.}
\label{F:graphs}
\end{figure}

We are now ready to define (discrete) polygonal fields. 

\begin{defn} Let ${\cal T}$ be a regular linear tessellation and ascribe fixed 
activity parameters $\pi_l \in (0,1)$ to the lines $l \in {\cal T}$. For a function 
${\cal H}_D : \hat{\Gamma}_D({\cal T}) \mapsto {\mathbb R} \cup \{ + \infty \}$,
the {\rm (discrete) polygonal field} $\hat{\cal A}_{{\cal H}_D}$ with Hamiltonian 
${\cal H}_D$ is the random element in $\hat{\Gamma}_D({\cal T})$ such that
\begin{equation}
\label{PolyField}
{\mathbb P}\left(\hat{\cal A}_{{\cal H}_D} = \hat{\gamma} \right) =
  \frac{\exp(-{\cal H}_D(\hat{\gamma})) \prod_{e \in E^*(\gamma)} \pi_{l[e]}
}{
 {\cal Z}[{\cal H}_D]
}.
\end{equation}
Here $E^*(\gamma)$ denotes the set of primary edges in $\gamma$ and 
$l[e] \in {\cal T}$ is the straight line containing the open edge $e$.
\end{defn}

The constant
\begin{equation}
\label{PartitionFunction}
  {\cal Z}[{\cal H}_D] = \sum_{\hat{\theta} \in \hat{\Gamma}_D({\cal T})}
  \exp(-{\cal H}_D(\hat{\theta})) \prod_{e \in E^*(\theta)} \pi_{l[e]}
\end{equation}
that ensures that ${\mathbb{P}}$ is a probability distribution is called
the partition function. Note that the polygonal field is a Gibbs field 
with Hamiltonian 
\[
{\cal H}_D(\hat{\gamma}) - \sum_{e \in E^*(\gamma)} \log \pi_{l[e]}.
\]
The terms $\log \pi_{l[e]}$ represent the energy needed to create the
edges. We prefer to use the term polygonal field since the consistent
fields to be considered shortly are inspired by the Arak--Surgailis 
polygonal Markov fields in the continuum, and, more importantly, 
because the graph-theoretical formulation leads us to define novel simulation
techniques for discrete random fields.

A careful choice of Hamiltonian in (\ref{PolyField}) leads to consistent 
polygonal fields. Recall that $k$ is the number of colour labels.

\begin{defn}
\label{D:consistent}
Let $\alpha_V \in [0,1]$ and set
\[
\alpha_X  =  1 - \alpha_V; 
\quad \quad
\alpha_T  =  \frac{1}{2} \left( 1 - \frac{k-2}{k-1} \alpha_X \right); 
\quad \quad
\epsilon  =  \frac{\alpha_V}{k-1} + \frac{k-2}{k-1} \alpha_T.
\]
The polygonal fields (\ref{PolyField}) defined by Hamiltonians
of the form
\begin{eqnarray}
\nonumber
{\Phi}_D(\hat\gamma) & = & - N_V(\gamma) \log \alpha_V 
  - N_T(\gamma) \log( (k-1) \alpha_T )
  - N_X(\gamma) \log( (k-1) \alpha_X ) \\
\nonumber
& + & {\rm{card}}(E(\gamma)) \log(k-1) \\
\label{E:consistent}
& - & \sum_{e \in E(\gamma)} \sum_{l \in {\cal T},\; l \nsim e} 
  \log(1- \epsilon \pi_{l}) 
+ \sum_{n(l_1,l_2) \in \gamma} 
  \log\left(1- \frac{\alpha_V}{k-1} \pi_{l_1}\pi_{l_2}\right)
\end{eqnarray}
are referred to as {\rm consistent polygonal fields}. Here,
$N_V$, $N_T$ and $N_X$ denote the number of V-, T- and X-vertices,
${\rm{card}}(E(\gamma))$ is the number of edges in $\gamma$, 
$n(l_1,l_2) \in \gamma$ ranges through the nodes of ${\cal{T}}$
that either lie on edges of $\gamma$ or coincide with one of its
vertices, and $l \nsim e$ means that the line $l$ intersects but is
not collinear with $e$. We use the convention that $0 \times \infty
= 0$.
\end{defn}

A few remarks are in order. The parameter $\alpha_V$  controls the 
relative frequency of $V$-vertices, $\alpha_T$ and $\alpha_X$ that
of vertices of degrees three and four respectively. These parameters are not
independent and, given the number of colours $k$, $\alpha_T$ and $\alpha_X$
are uniquely determined by $\alpha_V$. This dependence will become more
explicit in the dynamic representation to be derived in 
Section~\ref{S:dynamics}. Typical realisations for $k=3$ and $\alpha_V$ 
equal to $0$, $1/2$ and $1$ are shown in Figure~\ref{F:Arak}. In the left-most
panel, $\alpha_V = 0$ and there no vertices having degree two; in the
right-most panel, $\alpha_X = 0$ so that no vertices have degree four.
The central panel displays a coloured configuration with vertices of all 
degrees. Visually, the three patterns are strikingly different.

\begin{figure}
\begin{center}
  \fbox{\includegraphics[height=40mm]{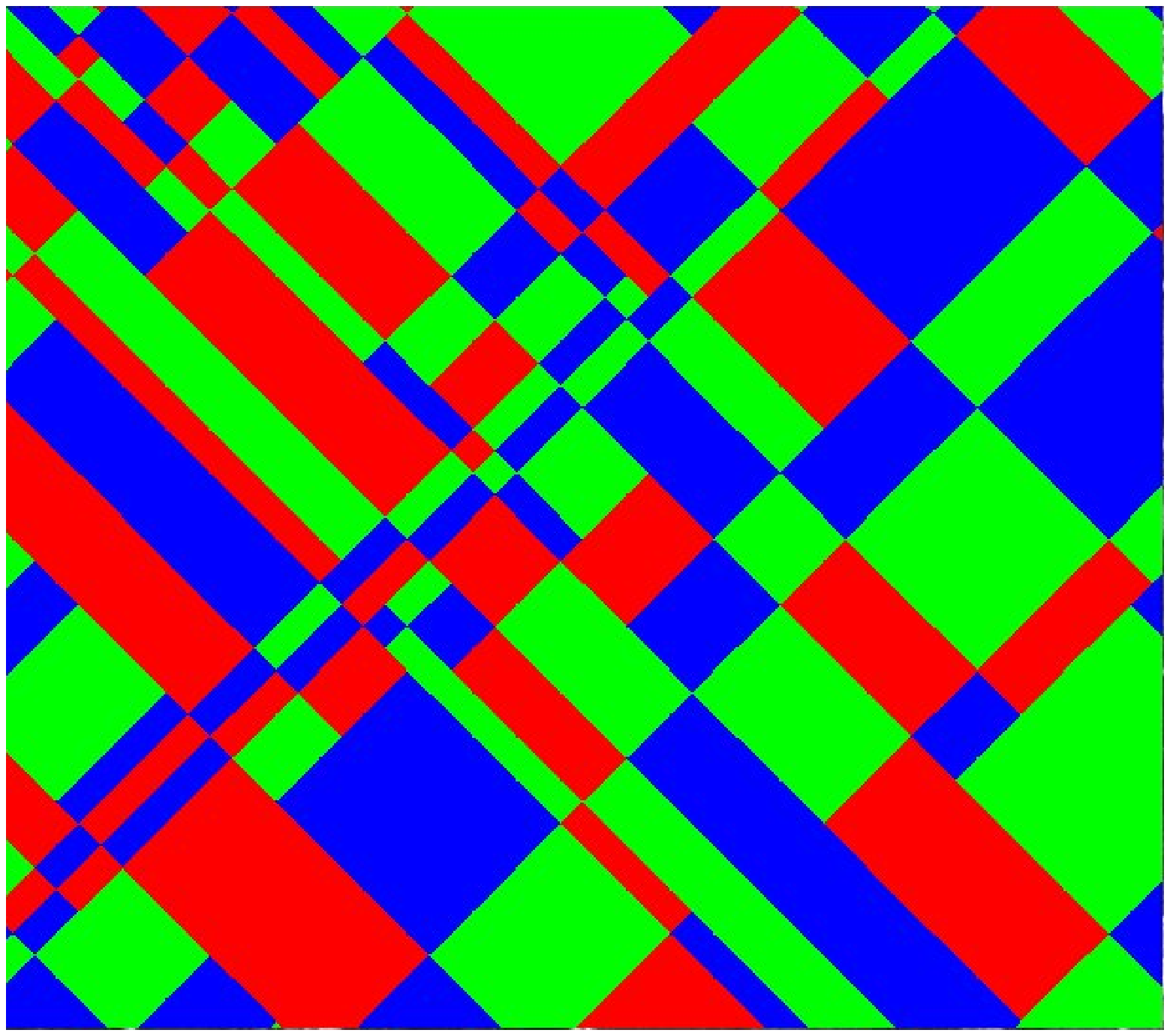}}
  \fbox{\includegraphics[height=40mm]{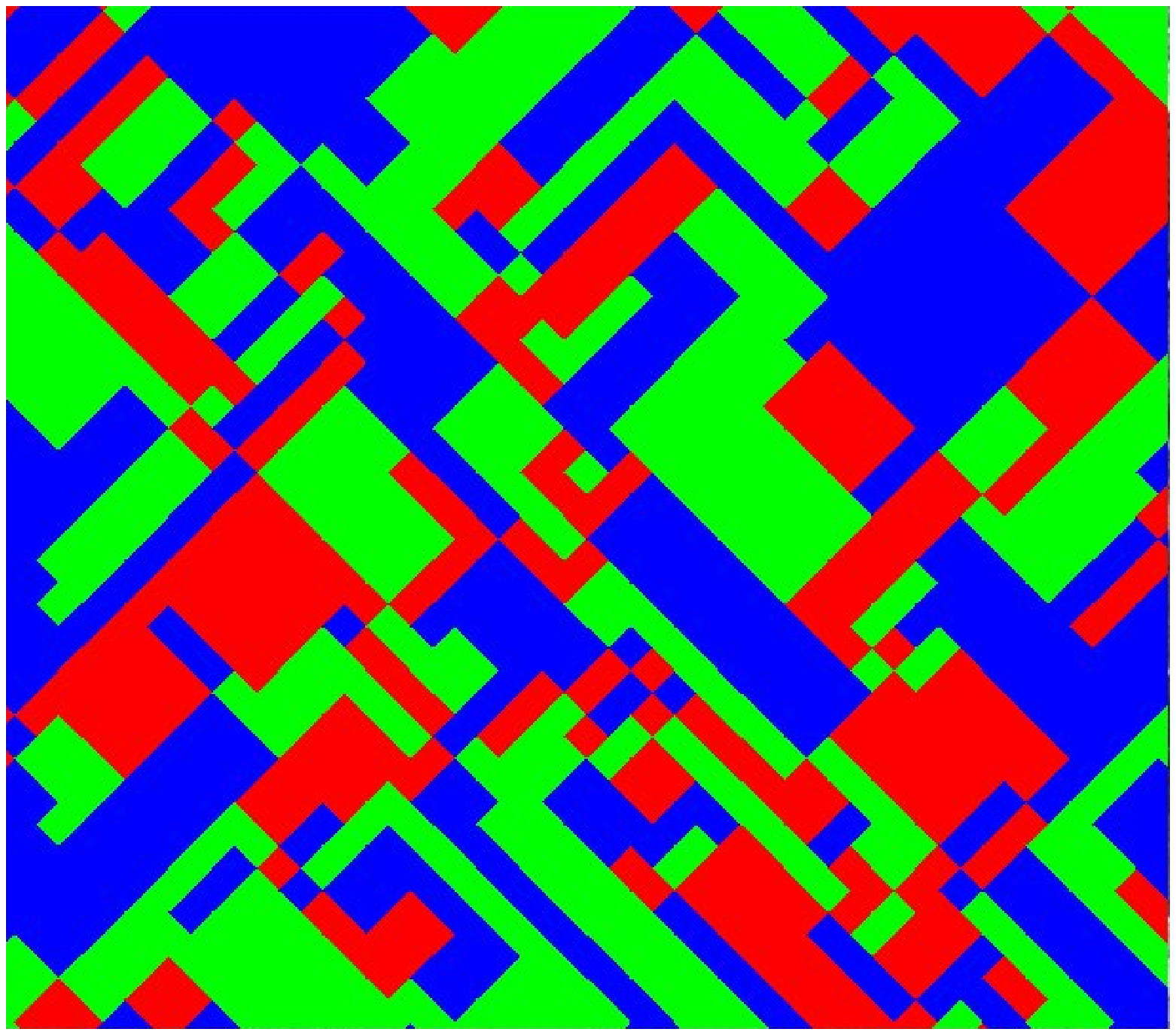}}
  \fbox{\includegraphics[height=40mm]{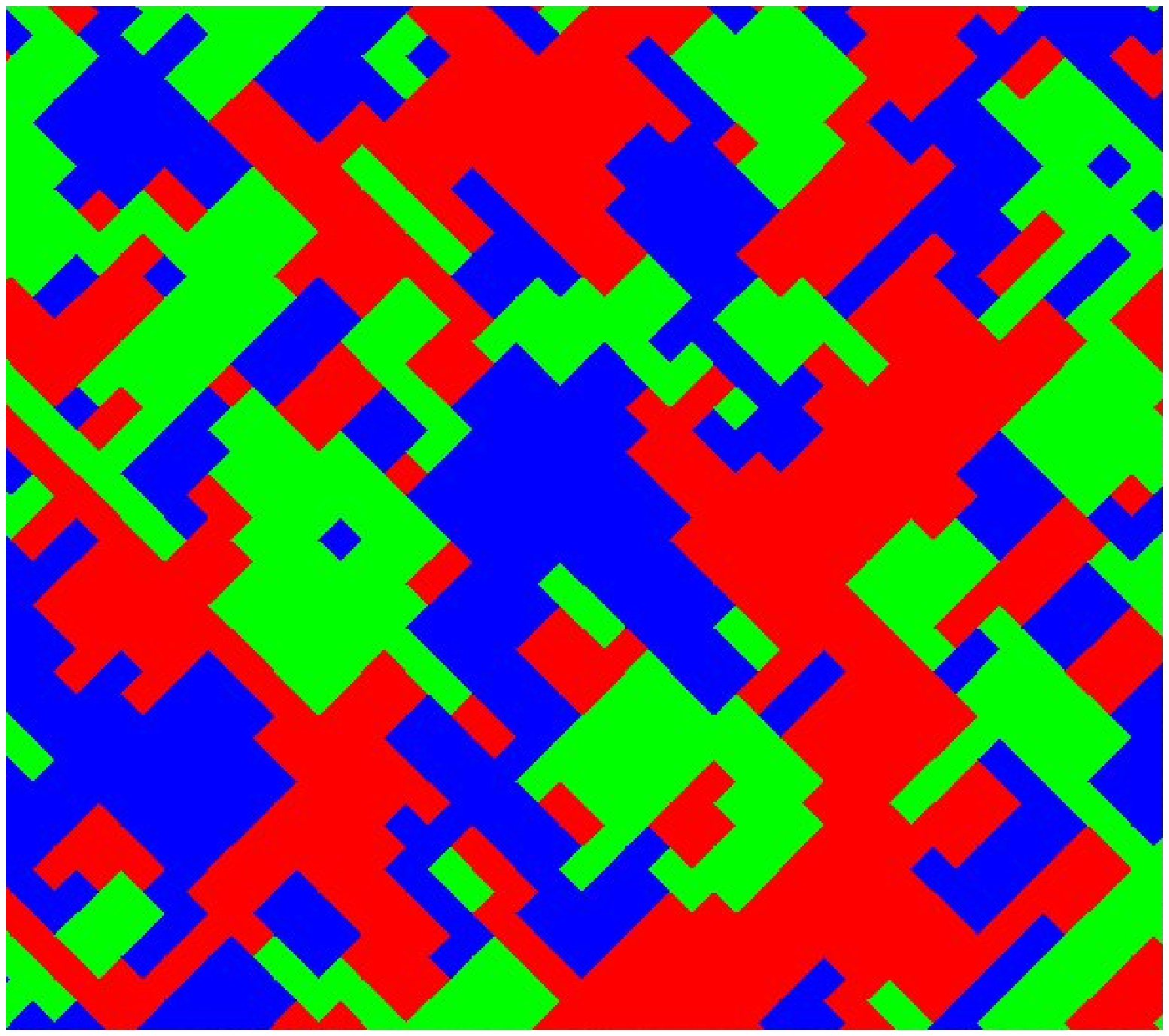}}
\end{center}
\caption{Realisations of $\hat{\cal{A}}_{\Phi_D}$ with $\alpha_V = 0$ (left),
$\alpha_V = 1/2$ (centre) and $\alpha_V = 1$ (right). In all panels,
the number of colour labels $k=3$.}
\label{F:Arak}
\end{figure}

For now, let us consider the special case $k=2$ and $\alpha_V=1$. First note 
that the family $\hat\Gamma_D({\cal{T}})$ of admissible coloured polygonal 
configurations does not include any member with an interior vertex of degree 
three. Therefore $N_T \equiv 0$ almost surely. Moreover, as $\alpha_X = 
1 - \alpha_V = 0$, the probability of any $\hat \gamma$ that contains 
X-vertices is zero. Hence, almost surely all interior vertices have degree 
two and $E^*(\gamma) = E(\gamma)$. Moreover, $\log(k-1) = \log \alpha_V = 0$,
and (\ref{E:consistent}) simplifies to
\[
 {\Phi}_D(\hat\gamma) = 
 -  \sum_{e \in E(\gamma)} \sum_{l \in {\cal T},\; l \nsim e} 
  \log(1- \pi_{l}) 
+ \sum_{n(l_1,l_2) \in \gamma} 
  \log\left(1- \pi_{l_1}\pi_{l_2}\right),
\]
cf.\ \cite{SchrLies10}.

The nomenclature in Definition~\ref{D:consistent} is justified by the following result.

\begin{thm}
\label{T:consistency}
The polygonal field $\hat{\cal{A}}_{\Phi_D}$ is consistent:
For bounded open convex $D' \subseteq D \subseteq {\mathbb R}^2$,
the field $\hat{\cal A}_{\Phi_{D}} \cap D'$ coincides in distribution
with $\hat{\cal A}_{\Phi_{D'}}.$
\end{thm}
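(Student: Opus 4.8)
The plan is to show that the restriction map $\hat\gamma \mapsto \hat\gamma \cap D'$ transports the law (\ref{PolyField}) on $\hat\Gamma_D(\mathcal T)$ onto the law (\ref{PolyField}) on $\hat\Gamma_{D'}(\mathcal T)$. Writing $w_D(\hat\gamma) = \exp(-\Phi_D(\hat\gamma))\prod_{e\in E^*(\gamma)}\pi_{l[e]}$ for the unnormalised weight in (\ref{PolyField}), it suffices to establish the proportionality
\[
\sum_{\hat\gamma\in\hat\Gamma_D(\mathcal T):\ \hat\gamma\cap D'=\hat\gamma'} w_D(\hat\gamma)\ =\ C\, w_{D'}(\hat\gamma') \qquad\text{for every } \hat\gamma'\in\hat\Gamma_{D'}(\mathcal T),
\]
with a single constant $C$ that does \emph{not} depend on $\hat\gamma'$. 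Indeed, summing this identity over all $\hat\gamma'$ forces $C=\mathcal Z[\Phi_D]/\mathcal Z[\Phi_{D'}]$, so that after division by the partition functions the two normalised laws agree; thus one never has to evaluate $C$ explicitly, only to check that the left-hand side is a $\hat\gamma'$-independent multiple of $w_{D'}(\hat\gamma')$.

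The first step is to reduce to an elementary extension. As $D'\subseteq D$ are bounded open convex, I interpolate them by a monotone family of bounded open convex sets $D'=D_0\subseteq\cdots\subseteq D_m=D$ and use transitivity of the restriction operation, so that it is enough to treat a single step $D_{j}\subseteq D_{j+1}$. Between combinatorial events of the boundary the set of lines of $\mathcal T$ meeting the domain, the set of nodes it contains, and the cyclic order of boundary crossings are all constant; since $w$ depends on $\hat\gamma$ only through the combinatorial quantities $N_V,N_T,N_X$, $\mathrm{card}(E)$, $E^*$, the covered nodes, and the incidences $l\nsim e$, the admissible configurations on $D_j$ and $D_{j+1}$ are then in weight-preserving bijection and nothing has to be proved. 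The genericity assumptions on the boundary (no node on $\partial D$, each line met in exactly two points) guarantee that the only events are that $\partial D$ crosses one further node $n(l_1,l_2)$, or that it ceases to meet one further line $l_0$. Hence the theorem reduces to verifying the displayed identity when $\overline{D_{j+1}}\setminus\overline{D_j}$ is the thin region swept across a single such event.

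The heart of the argument is then this elementary computation, and it is here that the algebraic definitions of $\alpha_X,\alpha_T$ and $\epsilon$ do all the work. For a fixed $\hat\gamma'$ I would enumerate the finitely many ways the configuration can be completed across the removed region: an edge of $\gamma'$ reaching the new interface either stops there or continues along its line, the crossing line may or may not carry a branch at the node (turning a point of $\gamma$ into a V-, T- or X-vertex), and each cell newly exposed in the annulus must be recoloured subject to the constraint that adjacent faces differ. Collecting the weight contributions, the number of admissible recolourings produces the factors $(k-1)$ and the prefactors $\alpha_V$, $(k-1)\alpha_T$, $(k-1)\alpha_X$ attached to the new vertices, while the factors $(1-\epsilon\pi_l)$ over crossings $l\nsim e$ and $(1-\tfrac{\alpha_V}{k-1}\pi_{l_1}\pi_{l_2})$ over nodes act as the ``no-branch'' probabilities. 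Summing over the alternatives, these contributions are designed to telescope into a single factor governed only by the activity $\pi_{l_0}$ (resp. $\pi_{l_1}\pi_{l_2}$) of the line (resp. node) being removed, by virtue of the identities $\alpha_X=1-\alpha_V$, $\alpha_T=\tfrac12\bigl(1-\tfrac{k-2}{k-1}\alpha_X\bigr)$ and $\epsilon=\tfrac{\alpha_V}{k-1}+\tfrac{k-2}{k-1}\alpha_T$.

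The step I expect to be the main obstacle is precisely this local bookkeeping at T- and X-vertices. One must track simultaneously how $N_V,N_T,N_X$, the edge and primary-edge counts, and the set of covered nodes change as a boundary vertex of $\hat\gamma'$ is absorbed into the interior, and then verify that the product of the resulting weight ratios with the colouring multiplicities is the \emph{same} number whatever the local interface data of $\hat\gamma'$ happen to be --- this $\hat\gamma'$-independence is the real content of consistency. The degree-three and degree-four cases are the delicate ones, since there the coefficient $\tfrac{k-2}{k-1}$ and the count $(k-1)$ of admissible recolourings of a newly split face have to cancel exactly; this cancellation is what the definitions of $\alpha_T$ and $\epsilon$ encode, and confirming it for every local configuration, together with the degenerate endpoints $\alpha_V\in\{0,1\}$ handled through the convention $0\times\infty=0$, is where essentially all the computation resides.
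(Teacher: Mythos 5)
Your strategy is genuinely different from the paper's, and it is viable in principle. The paper does not marginalise the weights directly: it first proves the dynamic representation (Theorem~2), identifying $\hat{\cal A}_{\Phi_D}$ with the trajectory law of a space--time particle system, and consistency then follows as in Arak--Surgailis and in the two-colour predecessor paper, because the dynamics on $D$ restricted to $D'$ reproduce the defining dynamics on $D'$. You instead propose to verify directly that clipping to $D'$ carries unnormalised weights to unnormalised weights up to a $\hat\gamma'$-independent constant, reducing by convex interpolation to elementary boundary events. Your preliminary reductions are sound: between events the clipping map is indeed a weight-preserving bijection (no nodes in the swept sliver means no change in vertex, edge, primary-edge, crossing or node terms), and generically the only events are the loss of a single node or of a single line.

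The gap is that the decisive node-event computation is only announced (``I would enumerate\dots'', ``where essentially all the computation resides''), never performed, so what you have is a proof plan rather than a proof. For the record, it does close, and the sums are worth recording. Work with weights multiplied by the constant $\prod_{n(l_1,l_2)\in D}\bigl(1-\tfrac{\alpha_V}{k-1}\pi_{l_1}\pi_{l_2}\bigr)$, so that a node off $\gamma$ carries the factor $\bigl(1-\tfrac{\alpha_V}{k-1}\pi_{l_1}\pi_{l_2}\bigr)$ and a node on $\gamma$ carries $1$. If no edge of $\hat\gamma'$ reaches the entering node $n(l_1,l_2)$, the vacant-node and interior-birth completions give $\bigl(1-\tfrac{\alpha_V}{k-1}\pi_{l_1}\pi_{l_2}\bigr)+(k-1)\cdot\alpha_V\pi_{l_1}\pi_{l_2}/(k-1)^2=1$. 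If exactly one edge, say on $l_1$, reaches it, then continuation, turn and split give $(1-\epsilon\pi_{l_2})+\tfrac{\alpha_V}{k-1}\pi_{l_2}+\tfrac{(k-2)\alpha_T}{k-1}\pi_{l_2}=1$, which is exactly the definition of $\epsilon$. If two edges collide there, then with equal colours above and below, the V- and X-completions give $\alpha_V+\alpha_X=1$, while with unequal colours the two T-completions and the X-completion give $2\alpha_T+\tfrac{k-2}{k-1}\alpha_X=1$, which is exactly the definition of $\alpha_T$; a line-loss event contributes the $\hat\gamma'$-independent factor $1+\pi_{l_0}$. These identities are precisely the rules {\bf (E3)}--{\bf (E4)} of the paper's dynamic representation in disguise: your route re-derives the dynamics node by node, which is why it works, but also why the paper's route is more economical, since it establishes the same local identities once and simultaneously harvests the partition function, the Markov properties and the sampler. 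To turn your proposal into a proof you must write out exactly this case analysis, including the edge-count changes caused by collinear edges being broken at T- and X-vertices (a factor $(k-1)^{-1}$ per extra edge), the colour multiplicities $k-1$ versus $k-2$, and the fact that open edges meeting only at a vertex create no new $l\nsim e$ crossing factors.
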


By letting $D$ increase to ${\mathbb R}^2$, Kolmogorov's theorem allows us 
to construct the whole plane extension of the process $\hat{\cal A}_{\Phi}$ 
such that the distribution of $\hat{\cal A}_{\Phi_D}$ coincides with that of 
$\hat{\cal A}_{\Phi} \cap D$ for all bounded open convex $D \subseteq 
{\mathbb R}^2.$ The proof of Theorem~\ref{T:consistency} relies on a dynamic
representation, which is the topic of the next section.

\section{Dynamic representation of consistent polygonal 
fields}
\label{S:dynamics}

Below, we present a dynamic representation for discrete consistent 
polygonal fields in analogy with the corresponding representation in 
\cite[Sections 4 and 5]{ArakSurg89}. The idea underlying this 
construction is to represent the edges of the polygonal field as
the trajectory of a one-dimensional particle system evolving in time.
More specifically, we interpret $D$ as a set of time-space points 
$(t,y) \in D$ and refer to $t$ as the time coordinate, to $y$ 
as the spatial coordinate of a particle. In this language, a straight 
line segment in $D$ stands for a piece of the time-space trajectory of 
a moving particle. Compared to the bi-coloured case discussed in 
\cite{SchrLies10}, note that in the current multi-colour context, in 
general it is not sufficient to consider colourless graphs only and
assign colourings with equal probability afterwards. 
Moreover, the interaction structure is not restricted to a hard core 
condition.

For convenience, we assume that no line in ${\cal{T}}$ or segment of 
$\partial D$ is parallel to the spatial axis. Since we might simply
rotate the coordinate system otherwise, these assumptions do not lead to a 
loss of generality. Recall that by assumption, each line $l$ of ${\cal T}$ 
intersects the boundary at two points. The two intersection points are 
ordered according to time and the one with the smaller time coordinate is 
denoted ${\iin} (l, D)$. Furthermore, no three lines of a regular linear 
tessellation intersect in a single point.

To define the dynamics, the left-most point of $\bar D$ is assigned a random
colour chosen uniformly from the $k$ possibilities. Let particles be born 
independently of other particles
\begin{itemize}
\item with probability $\alpha_V \pi_{l_1} \pi_{l_2} / (k-1)$ at each node 
      $n(l_1,l_2)$, that is, the intersection of two lines $l_1$ and $l_2$ in 
      ${\cal T}$, which falls in $D$  (interior birth site),
\item with probability ${\pi_l} / {(1+\pi_l)}$ at each entry point 
      ${\iin} (l,D)$ of a line $l \in {\cal T}$ into $D$
      (boundary birth site).
\end{itemize}
Each interior birth site $n(l_1,l_2)$ emits two particles moving with 
initial velocities such that the initial segments of their trajectories 
lie on the lines $l_1$ and  $l_2$ of the tessellation that emanate from the 
birth site, unless another particle (either a single one or two colliding particles)
previously born hits the site, in which case the birth does not occur. 
Each boundary birth site ${\iin} (l,D)$ emits a single particle moving with 
initial velocity such that the initial segment of its trajectory lies on $l$.
Note that no precaution similar to the one for interior birth sites above is 
needed because boundary birth sites cannot be hit by previously born particles. 
The initial trajectory or trajectories of a birth event bound a new
polygonal region, the colour of which is chosen uniformly from the $k-1$
colours that differ from that of the polygon just prior to the birth, or
in other words, lying to the left of the birth site.

\begin{figure}[thb]
\begin{center}
{\includegraphics[height=2.5cm,width=2.5cm]{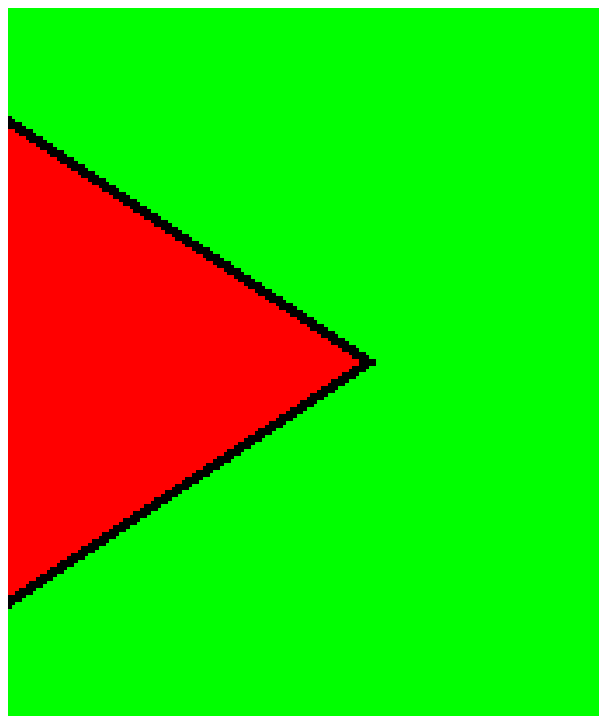}}
\hspace{1cm}
{\includegraphics[height=2.5cm,width=2.5cm]{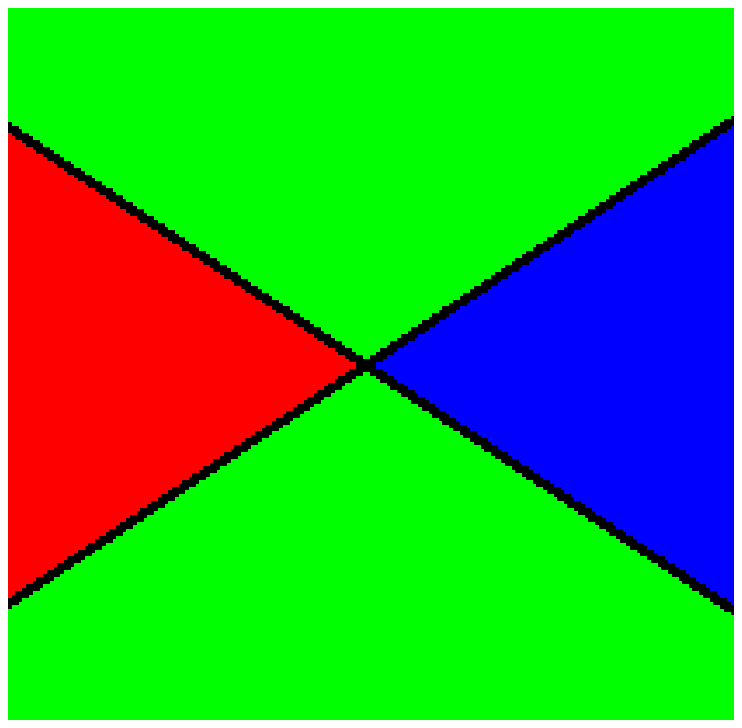}}
\end{center}
\begin{center}
{\includegraphics[height=2.5cm,width=2.5cm]{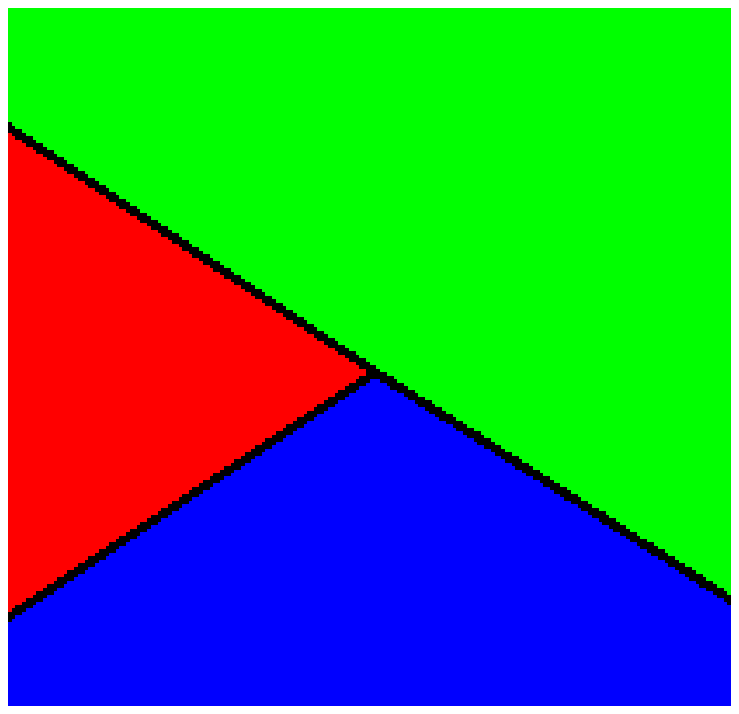}}
\hspace{1cm}
{\includegraphics[height=2.5cm,width=2.5cm]{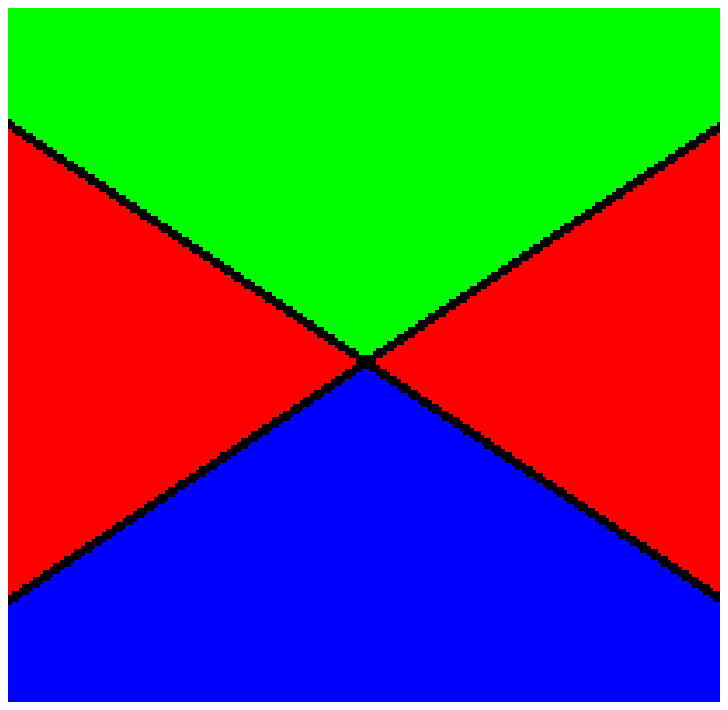}}
\end{center}
\caption{Top row ({\bf{(E3) [a]}}): In the left-most panel, both particles die;
they survive in the panel on the right and create a new polygon coloured blue.
Bottom row ({\bf{(eE3) [b]}}): In the left-most panel, one particles survives;
both particles survive in the panel on the right and create a new polygon 
coloured red.}
\label{F:E3}
\end{figure}

All particles evolve independently in time according to the following rules.
\begin{description}
\item{\bf (E1)} Between the critical moments listed below each particle
                  moves freely with constant velocity.
\item{\bf (E2)} When a particle hits the boundary $\partial D,$ it dies.
\item{\bf (E3)} In case of a collision of two particles, that is, equal spatial 
        coordinates $y$ at some time $t$ with $(t,y) = n(l_i, l_j) \in D$, 
        distinguish the following cases.
\begin{description}
\item[a] If the colours above and below $(t,y)$ are identical, say $i$, 
         with probability $\alpha_V$ both particles die. With probability 
         $\alpha_X$ both particles survive to create a new polygon whose colour 
         is chosen uniformly from those not equal to $i$ (cf.\ Figure~\ref{F:E3}).
\item[b] If the colours above and below $(t,y)$ are different, say $i$ and $j$,
         with probability $\alpha_T$, each of the two particles survives while
         the other dies. With probability $(k-2) \alpha_X/ (k-1) = 1 - 2 \alpha_T$, 
         both particles survive to create a new polygon whose colour is chosen
         uniformly from those not equal to either $i$ or $j$ (cf.\ Figure~\ref{F:E3}).
\end{description}
         Recall that a collision prevents a birth from happening at the node.
\item{\bf (E4)} Whenever a particle moving in time-space along $l_i \in 
        {\cal T}$ reaches a node $n(l_i,l_j),$ it changes its velocity so 
        as to move along $l_j$ with probability $ \alpha_V \pi_{l_j} / (k-1)$,
        it splits into two particles moving along $l_i$ and $l_j$ with 
        probability $(k-2) \alpha_T \pi_{l_j} / (k-1)$ and keeps moving along 
        $l_i$ otherwise (with probability $1-\epsilon \pi_{l_j}$). In case of a
        split, a new polygon is created whose colour is chosen uniformly from 
        the $k-2$ possibilities. See Figure~\ref{F:E4}.
\end{description}

The dynamics described above define a random coloured polygonal configuration
$\hat {\cal{D}}_D$. The key observation is that its distribution is identical
to that of $\hat {\cal{A}}_{\Phi_D}$.

\begin{thm}
\label{T:sameLaw}
The random elements $\hat {\cal{A}}_{\Phi_D}$ and $\hat {\cal{D}}_D$ coincide
in distribution.
\end{thm}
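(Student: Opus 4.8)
\emph{Proof strategy (plan).} The plan is to compute the distribution of $\hat{\cal D}_D$ directly and to compare it with (\ref{PolyField}). Since (\ref{PolyField}) is the normalisation of the weight $w(\hat\gamma) := \exp(-\Phi_D(\hat\gamma)) \prod_{e \in E^*(\gamma)} \pi_{l[e]}$, it is enough to show that $\PP(\hat{\cal D}_D = \hat\gamma)$ equals $w(\hat\gamma)$ up to a factor that does not depend on $\hat\gamma$; the two laws then coincide after normalisation. As only finitely many lines of ${\cal T}$ meet $\bar D$, the particle system has finitely many birth sites and critical events and terminates almost surely in a member of $\hat\Gamma_D({\cal T})$, in analogy with \cite[Sections 4 and 5]{ArakSurg89}. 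The key structural point is that, on the event $\{\hat{\cal D}_D = \hat\gamma\}$, the geometry of $\gamma$ together with the colours of $\hat\gamma$ determine every birth, every velocity change under (E4), every survival or death under (E3) and every colour drawn; the only undetermined randomness is that of the birth coins at nodes traversed by a particle, where the birth is suppressed irrespective of the coin and is simply integrated out to a factor $1$. Hence $\PP(\hat{\cal D}_D = \hat\gamma)$ factorises into a product of independent local probabilities indexed by the boundary entry points and the interior nodes of ${\cal T}$.

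I would then group these factors by the role each node plays in $\hat\gamma$. The leftmost point contributes $1/k$; a boundary entry point ${\iin}(l,D)$ contributes $\pi_l/(1+\pi_l)$ or $1/(1+\pi_l)$ according to whether or not a primary edge of $\gamma$ starts there; a node off $\gamma$ contributes the no-birth probability $1-\frac{\alpha_V}{k-1}\pi_{l_1}\pi_{l_2}$; and a node interior to an edge contributes the ``keep moving straight'' probability $1-\epsilon\pi_l$ of (E4). Because a $\pi$-weight is produced exactly when a new straight piece of trajectory is initiated (at a boundary or interior birth, at a turn, or at a split), these weights multiply to $\prod_{e \in E^*(\gamma)}\pi_{l[e]}$, matching the edge weight in (\ref{PolyField}), while the straight-passage factors multiply to $\prod_{e \in E(\gamma)}\prod_{l \nsim e}(1-\epsilon\pi_l)$, matching the corresponding term of $\Phi_D$.

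The decisive and most delicate step concerns the vertices, where the multi-colour accounting genuinely departs from the bi-coloured model of \cite{SchrLies10}. For each vertex I would multiply its event probability by its colour-choice probability and check that the product is \emph{independent of the colouring}, as it must be since $\Phi_D$ depends on $\gamma$ alone. The cancellations are precisely those built into the definitions of $\alpha_T$, $\alpha_X$ and $\epsilon$: an X-vertex contributes $\frac{\alpha_X}{k-1}$ whether it arises under (E3)[a] (equal neighbouring colours, weight $\alpha_X$, new colour among $k-1$) or under (E3)[b] (unequal colours, weight $(k-2)\alpha_X/(k-1)$, new colour among $k-2$); a T-vertex carries a factor $\alpha_T$ whether it is a split under (E4) or a one-survivor collision under (E3)[b]; and a V-vertex always carries a single $\alpha_V$, be it a birth, a turn or a mutual annihilation. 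Collecting these, and attributing the factor $(k-1)^{-{\rm card}(E(\gamma))}$ of $\Phi_D$ to the individual edges through their earlier-time endpoints, an event-by-event check shows that the powers of $(k-1)$ agree, the surplus factors $(k-1)^{N_T}$ and $(k-1)^{N_X}$ compensating exactly for the edges issued at T- and X-vertices, so that the vertices reproduce
\[
\alpha_V^{N_V}\,\big((k-1)\alpha_T\big)^{N_T}\,\big((k-1)\alpha_X\big)^{N_X}\,(k-1)^{-{\rm card}(E(\gamma))}.
\]

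Finally, the residual factors depend on ${\cal T}$ and $D$ but not on $\hat\gamma$: the $1/k$, the boundary denominators $\prod_l(1+\pi_l)^{-1}$, and, after using the reciprocal node factors $\prod_{n(l_1,l_2)\in\gamma}(1-\frac{\alpha_V}{k-1}\pi_{l_1}\pi_{l_2})^{-1}$ supplied by $\Phi_D$ to extend the no-birth product from the off-graph nodes to all interior nodes, the $\hat\gamma$-independent product $\prod_{n(l_1,l_2)}(1-\frac{\alpha_V}{k-1}\pi_{l_1}\pi_{l_2})$. Absorbing them into one constant $C_0$ gives $\PP(\hat{\cal D}_D = \hat\gamma) = C_0\,w(\hat\gamma)$, and summation over $\hat\gamma$ forces $C_0 = 1/{\cal Z}[\Phi_D]$, which is the assertion. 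I expect the main obstacle to be not the bookkeeping itself but the verification that the event-times-colour weight at each vertex is colour-free; this is exactly where the prescribed values of $\alpha_T$, $\alpha_X$ and $\epsilon$ are needed, and it is the feature that makes the multi-colour extension work.
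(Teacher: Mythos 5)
Your proposal is correct and follows essentially the same route as the paper: both compute $\PP(\hat{\cal D}_D = \hat\gamma)$ as a product of independent local factors (births, no-births, passages, collisions, colour choices), observe that it equals $\exp(-\Phi_D(\hat\gamma))\prod_{e \in E^*(\gamma)} \pi_{l[e]}$ times a $\hat\gamma$-independent constant, and conclude by normalisation. Your grouping by nodes with the explicit colour-free vertex weights $\alpha_V$, $\alpha_T$, $\alpha_X/(k-1)$ is just a reorganisation of the paper's edge-by-edge bookkeeping, with all the key cancellations (the $(k-1)$ and $(k-2)$ colour factors, the primary-edge $\pi$-weights, the extension of the no-birth product to all nodes) identified correctly.
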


\begin{figure}[bht]
\begin{center}
{\includegraphics[height=2.5cm]{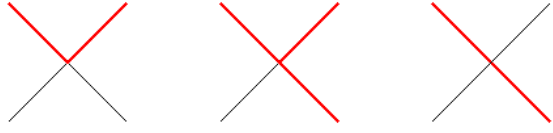}}
\end{center}
\caption{{\bf (E4):} Velocity update (left), split (middle) and continuation
(right) of a trajectory outlined in red.}
\label{F:E4}
\end{figure}

\begin{proof}
In order to calculate the probability that some $\hat\gamma \in \hat\Gamma_D({\cal T})$
is generated by the particle dynamics {\bf E1--E4}, observe that
\begin{itemize}
\item each edge $e \in E(\gamma)$ whose initial (lower time coordinate) 
  vertex lies on $\partial D$ yields a factor ${\pi_{l[e]}} / (1+\pi_{l[e]})$
  (boundary birth site) times $\prod_{l \in {\cal T},\; l \nsim e}
  (1-\epsilon \pi_{l})$ (no velocity/split updates along $e$) times $1/(k-1)$
  for the colour;
\item the two edges $e_1,e_2 \in E(\gamma)$ emanating from a common interior
  birth site $n(l_1,l_2)$ yield a factor
  $\alpha_V \pi_{l_1} \pi_{l_2}/(k-1)$ (the birth probability)
  times $\prod_{i=1}^2 \prod_{l \in {\cal T},\; l \nsim e_i} (1-\epsilon \pi_{l})$ 
  (no velocity/split updates along $e_i$) times $1/(k-1)$ for the colour;
\item each edge $e \in E(\gamma)$ arising due to a velocity update 
  yields a factor $\alpha_V \pi_{l[e]} / (k-1)$ (velocity update 
  probability) times  $\prod_{l \in {\cal T},\; l \nsim e} (1-\epsilon \pi_{l})$ 
 (no velocity/split updates along $e$);
\item the two edges $e_1,e_2 \in E(\gamma)$ arising from a split at node
  $n(l_1,l_2)$ where $e_1$ is a continuation and $e_2$ a new direction yield a 
 factor $(k-2) \alpha_T \pi_{l_2}/(k-1)$ (the split probability)
 times $\prod_{i=1}^2 \prod_{l \in {\cal T},\; l \nsim e_i} (1-\epsilon \pi_{l})$ 
 (no velocity/split updates along $e_i$) times $1/(k-2)$ for the colour;
\item the two edges $e_1, e_2 \in E(\gamma)$ arising due to an X-collision 
  contribute a factor $\alpha_X / (k-1)$ times 
  $\prod_{i=1}^2 \prod_{l \in {\cal T},\; l \nsim e_i} (1-\epsilon \pi_{l})$
  (no velocity/split updates along $e_i$); 
\item each edge $e$ emanating from a T-collision contributes a factor $\alpha_T$ 
  times $\prod_{l \in {\cal T},\; l \nsim e} (1-\epsilon \pi_{l})$ (no velocity/split 
  updates along $e$);
\item each collision of V-type contributes a factor $\alpha_V$;
\item the initial colour choice contributes a factor $1/k$;
\item the absence of birth sites in nodes $n(l_1,l_2) \in D$ that do not belong
  to $\gamma$ yields the factor $\prod_{n(l_1,l_2) \in D   \setminus \gamma} 
 (1- \alpha_V \pi_{l_1}\pi_{l_2}/(k-1))$ ;
\item the absence of boundary birth sites at those entry points into $D$ of 
  lines of ${\cal T}$ which do not give rise to an edge of $\gamma$ yields 
 the factor 
 $\prod_{l \in {\cal T},\; l \cap D \neq \emptyset,\;{\iin} (l,D) \not\in \gamma}
 (1+\pi_{l})^{-1}.$
\end{itemize}
Collecting all factors implies that the probability of $\hat\gamma$ is
\[
   \left( \prod_{e \in E(\gamma)} \prod_{l \in {\cal T},\; l \nsim e} 
   (1- \epsilon \pi_{l}) \right)
   \left( \prod_{n(l_1,l_2) \in D\setminus \gamma} 
   \left(1- \frac{\alpha_V}{k-1} \pi_{l_1}\pi_{l_2}\right) \right)
   \times \left( \prod_{l \in {\cal T},\; l \cap D \neq \emptyset} 
                \frac{1}{1+\pi_{l}} \right) 
\]
times
\[
\frac{1}{k} \,   
\alpha_V^{N_V} \, \left( \alpha_T (k-1) \right)^{N_T} \,
\left( \alpha_X (k-1) \right)^{N_X} 
\prod_{e \in E^*(\gamma)} \pi_{l[e]} \times 
\prod_{e \in E(\gamma)} \frac{1}{k-1},
\]
which completes the proof.
\end{proof}

As an immediate consequence of the proof of Theorem~\ref{T:sameLaw},
we obtain an explicit and simple expression for the partition function of 
consistent polygonal fields.

\begin{cor}
The partition function (\ref{PartitionFunction}) is given by
\[
k \prod_{l \in {\cal T},\; l \cap D \neq \emptyset} ( 1+\pi_{l} ) 
  \prod_{n(l_1,l_2) \in D} \left(
    1- \frac{\alpha_V}{k-1} \pi_{l_1}\pi_{l_2}\right)^{-1}
\]
\end{cor}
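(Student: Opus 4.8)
The plan is to read off $\mathcal{Z}[\Phi_D]$ by equating the two expressions now available for $\mathbb{P}(\hat{\mathcal{A}}_{\Phi_D} = \hat\gamma)$. The defining formula~(\ref{PolyField}) gives this probability as $\exp(-\Phi_D(\hat\gamma)) \prod_{e \in E^*(\gamma)} \pi_{l[e]}$ divided by $\mathcal{Z}[\Phi_D]$, while Theorem~\ref{T:sameLaw} identifies it with the probability that the particle dynamics {\bf E1--E4} produce $\hat\gamma$, which was computed as the explicit product displayed at the end of the previous proof. Since $\mathcal{Z}[\Phi_D]$ is a constant independent of $\hat\gamma$, I would isolate it as the ratio of the numerator $\exp(-\Phi_D(\hat\gamma)) \prod_{e \in E^*(\gamma)} \pi_{l[e]}$ to the dynamic-representation probability, evaluated at an arbitrary admissible $\hat\gamma$; consequently every factor depending on $\hat\gamma$ is forced to cancel.

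Concretely, I would substitute the Hamiltonian~(\ref{E:consistent}) to obtain
\[
\exp(-\Phi_D(\hat\gamma)) \prod_{e \in E^*(\gamma)} \pi_{l[e]}
= \alpha_V^{N_V} ((k-1)\alpha_T)^{N_T} ((k-1)\alpha_X)^{N_X} (k-1)^{-{\rm{card}}(E(\gamma))}
   P_e(\gamma)\, P_n(\gamma)\, \prod_{e \in E^*(\gamma)} \pi_{l[e]},
\]
where $P_e(\gamma) = \prod_{e \in E(\gamma)} \prod_{l \nsim e} (1-\epsilon \pi_l)$ and $P_n(\gamma) = \prod_{n(l_1,l_2)\in\gamma} (1-\frac{\alpha_V}{k-1}\pi_{l_1}\pi_{l_2})^{-1}$. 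Comparing term by term with the dynamic-representation probability, the activity factors $\alpha_V^{N_V}$, $((k-1)\alpha_T)^{N_T}$, $((k-1)\alpha_X)^{N_X}$, the edge-count factor $(k-1)^{-{\rm{card}}(E(\gamma))}$ (appearing there as $\prod_{e\in E(\gamma)}(k-1)^{-1}$), the ``no velocity/split update'' product $P_e(\gamma)$, and the primary-edge activities $\prod_{e \in E^*(\gamma)} \pi_{l[e]}$ all occur identically in both expressions and cancel in the ratio.

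What survives is purely the node terms together with the global constants. The numerator leaves $P_n(\gamma)$, while the dynamic probability contributes $\tfrac1k$, the boundary factor $\prod_{l\cap D\neq\emptyset}(1+\pi_l)^{-1}$, and the ``absence of interior births'' factor $\prod_{n(l_1,l_2)\in D\setminus\gamma}(1-\frac{\alpha_V}{k-1}\pi_{l_1}\pi_{l_2})$. Dividing, the $\hat\gamma$-dependence in the node terms cancels because the nodes lying on $\gamma$ and those in $D\setminus\gamma$ partition the full set of nodes of ${\cal T}$ in $D$, so the two node products merge into a single product over all of $D$, yielding
\[
\mathcal{Z}[\Phi_D]
= k \prod_{l\in{\cal T},\; l\cap D\neq\emptyset}(1+\pi_l)
  \prod_{n(l_1,l_2)\in D}\left(1-\frac{\alpha_V}{k-1}\pi_{l_1}\pi_{l_2}\right)^{-1}.
\]

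The computation is essentially bookkeeping, and the heavy lifting has already been done in the proof of Theorem~\ref{T:sameLaw}; the only step requiring genuine care is the recombination of the node products. I would verify it by checking the exponent of each node factor $(1-\frac{\alpha_V}{k-1}\pi_{l_1}\pi_{l_2})$ in the ratio: a node on $\gamma$ contributes exponent $-1$ from $P_n(\gamma)$ and $0$ from the dynamic factor, while a node in $D\setminus\gamma$ contributes $0$ from $P_n(\gamma)$ and $+1$ from the ``absence of births'' product, so in the quotient every node carries total exponent $-1$ irrespective of whether it lies on $\gamma$. This uniformity is precisely what makes the $\hat\gamma$-dependence disappear and produces the clean product over all nodes of $D$.
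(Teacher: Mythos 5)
Your proposal is correct and is precisely the argument the paper intends: the corollary is stated as an immediate consequence of the proof of Theorem~\ref{T:sameLaw}, namely that dividing $\exp(-\Phi_D(\hat\gamma))\prod_{e \in E^*(\gamma)}\pi_{l[e]}$ by the dynamics probability cancels all $\hat\gamma$-dependent factors, with the node products over $\gamma$ and $D\setminus\gamma$ merging into a single product over all nodes in $D$. Your exponent check confirming that every node factor carries total exponent $-1$ in the quotient is exactly the bookkeeping the paper leaves implicit.
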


Having established a proper dynamic representation, Theorem~\ref{T:consistency} 
follows from Theorem~\ref{T:sameLaw} in complete analogy with the proof
of \cite[Thm.~5.1]{ArakSurg89} as in \cite[Thm.~1]{SchrLies10}. A discrete 
analogue of the Poisson line transect property \cite[Thm.~5.1.c]{ArakSurg89} holds
as well. Indeed, combining consistency with the boundary birth mechanism 
of the dynamical representation, we obtain the following.

\begin{cor} Let $l$ be a straight line that contains no nodes of ${\cal T}$. 
Then, the intersection points and intersection directions of $l$ with the 
edges of the polygonal field $\hat{\cal A}_{\Phi}$ coincide in distribution 
with the intersection points and directions of $l$ with the line field 
$\Lambda_{\cal T}$ defined to be the random sub-collection of ${\cal T}$ 
where each straight line $l^* \in {\cal T}$ is chosen to belong to 
$\Lambda_{\cal T}$ with probability ${\pi_{l^*}} / (1+\pi_{l^*})$ and 
rejected otherwise, and all these choices are made independently.
\end{cor}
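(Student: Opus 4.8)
The plan is to read the transect property off the dynamic representation, using consistency to place the line $l$ in the role of the \emph{incoming boundary} along which particles are born. Since the statement concerns only the crossings of the edges with $l$, and these are local along $l$, I would first invoke Theorem~\ref{T:consistency}: for an increasing family of bounded open convex sets exhausting a neighbourhood of any given segment of $l$, the law of the crossings of $\hat{\cal A}_{\Phi_D}$ with $l \cap D$ does not depend on $D$ and agrees with that of the whole-plane field $\hat{\cal A}_\Phi$. It therefore suffices to compute this law on one conveniently chosen domain and then let the segment grow to all of $l$.

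The heart of the argument is to orient the coordinate frame so that the spatial axis is parallel to $l$, i.e.\ $l = \{ t = t_0 \}$ is a time slice, and to take the domain on the forward-time side, $D \subseteq \{ t \ge t_0 \}$, with incoming boundary $s = l \cap \overline{D}$. By Theorem~\ref{T:sameLaw} the field equals $\hat{\cal D}_D$ in distribution, so its edges meeting $l$ correspond to the trajectories of $\hat{\cal D}_D$ touching the slice $\{ t = t_0 \}$. Because particles move monotonically forward in time under rules {\bf E1}--{\bf E4}, and $l$ carries no nodes of ${\cal T}$ by hypothesis, no interior birth and no collision can occur on $l$ and no trajectory can re-enter $\{ t = t_0 \}$ after its birth; hence \emph{every} edge meeting $l$ must originate there as a boundary birth. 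Since $D \subseteq \{ t \ge t_0 \}$, each line $l^* \in {\cal T}$ crossing $s$ has its entry point ${\iin}(l^*, D)$ on $s$, and the boundary birth mechanism emits a particle there, independently of the others, with probability $\pi_{l^*} / (1 + \pi_{l^*})$; the emitted particle travels initially along $l^*$, so the associated edge crosses $l$ exactly at $l^* \cap l$ in the direction of $l^*$. This is precisely the point--direction configuration obtained by retaining each $l^*$ in $\Lambda_{\cal T}$ with probability $\pi_{l^*} / (1 + \pi_{l^*})$ independently, and letting $s \uparrow l$ finishes the proof.

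The step I expect to be the main obstacle is making this orientation legitimate within the conventions of the dynamic representation. Putting the spatial axis along $l$ is exactly what forces every crossing of $s$ to be a min-time entry (the decomposition of $\partial D$ into incoming and outgoing arcs is \emph{not} enough for a tilted $l$, as a steep line can meet both its crossings on the same arc), yet it turns any line of ${\cal T}$ parallel to $l$ into a line parallel to the spatial axis, which the standing convention excludes, and it makes $s$ a boundary segment parallel to the spatial axis. In the generic case, where no line of ${\cal T}$ is parallel to $l$, the construction applies directly and the spatial-axis-parallel incoming boundary is harmless. In the degenerate case I would perturb the time direction by a small angle $\delta$, so that the finitely many lines of ${\cal T}$ relevant to a bounded segment acquire finite, though large, velocities and the dynamics are well defined; crucially, the lines parallel to $l$ never meet $l$ and hence do not contribute to the transect, so their only effect is on the surrounding dynamics. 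Running the argument for $\delta > 0$ and passing to the limit $\delta \downarrow 0$ via consistency then recovers the statement. The remaining points are routine: the hypothesis that $l$ contains no nodes of ${\cal T}$ guarantees that distinct lines cross $l$ at distinct points and that the incoming boundary contains no intersection points, so the births are cleanly indexed by the lines of ${\cal T}$ crossing $l$, and monotonicity of time along trajectories ensures that births, not deaths, account for all crossings.
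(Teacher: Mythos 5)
Your proposal is correct and takes essentially the same route as the paper, whose entire justification is the one-line remark that the corollary follows by ``combining consistency with the boundary birth mechanism of the dynamical representation'': like the paper, you use Theorem~\ref{T:consistency} to reduce to a domain whose incoming boundary lies along $l$ and Theorem~\ref{T:sameLaw} to identify the crossings of $l$ with independent boundary births of probability $\pi_{l^*}/(1+\pi_{l^*})$, your perturbation of the time axis merely filling in the coordinate-convention details the paper leaves implicit. One small simplification: once $\delta$ is chosen smaller than the least angle between $l$ and the finitely many lines of ${\cal T}$ meeting the relevant bounded segment, the $\delta>0$ argument already gives the exact law of the crossings there, so only the limit over growing segments is needed, not a limit $\delta \downarrow 0$.
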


To conclude this section, we turn to Markov properties that are
direct consequences of the dynamic representation. 

A spatial Markov property reminiscent of that enjoyed by the 
Arak--Clifford--Surgailis model in the continuum is the following.
For a piece wise smooth simple closed curve $\theta \subset {\mathbb R}^2$ 
containing no nodes of ${\cal T}$, the conditional distribution of 
$\hat{\cal A}_{\Phi}$ in the interior of $\theta$ depends on the configuration 
exterior to $\theta$ only through the intersections of $\theta$ with the edges 
of the polygonal field and through the colouring of the field along $\theta.$  

To relate our model to Gibbs fields commonly used in image analysis, assume 
for the remainder of this section that ${\cal{T}}$ forms a regular lattice and 
$D$ is an $m\times n$ rectangle. In this case, $D$ is divided by ${\cal{T}}$
in square cells, known as pixels. There is a one-to-one correspondence
between a coloured polygonal configuration $\hat \gamma$ and the array of pixel 
colours. Indeed, the colour of a pixel is that of the face of $\hat \gamma$ 
that it falls in, and, reversely, the edges of $\gamma$ are composed of the 
segments between pixels of different colours. In this dual framework, we
obtain the following local Markov factorisation.

\begin{cor}
Let $D$ be an $m\times n$ array and let ${\cal{T}} = \{ l_1, \dots, l_{m+n-2} \}$ 
be the corresponding regular linear tessellation with the indices chosen in such 
a way that for $i=1, \dots, n-1$,  $l_i$ is the horizontal line between the $i$-th 
and $(i+1)$-st row and for $i = 1, \dots, m-1$, $l_{n+i-1}$ is the vertical line
between the $i$-th and $(i+1)$-st column. Then the random element $\hat A_{\Phi_D}$ 
is the dual of a random vector $X = (X_1, \dots, X_{mn})$ of pixel values indexed
in column major order with a joint probability mass function $\PP(\hat \gamma) = 
\PP( X_1 = x_1; \cdots; X_{mn} = x_{mn} )$ that factorises as
\[
\PP( X_1 = x_1 ) \prod_{i=2}^n \PP( X_i = x_i \mid X_{i-1} = x_{i-1} ) 
 \prod_{i=1}^{m-1} \PP( X_{in+1} = x_{in+1} \mid X_{(i-1)n+1} = x_{(i-1)n+1} ) 
\]
\[
\times 
\prod_{i=1}^{m-1} \prod_{j=2}^n \PP( X_{in+j} = x_{in+j} \mid 
X_{(i-1)n+j-1} = x_{(i-1)n+j-1} ;
\]
\[
\quad \quad \quad \quad  X_{(i-1)n+j} = x_{(i-1)n+j} ; 
X_{in+j-1} = x_{in+j-1} )
\]
for $x_i \in \{ 1, \dots, k \}$, $i=1, \dots, mn$.
\end{cor}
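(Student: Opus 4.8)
The plan is to reveal the colour array in column-major order and to verify, term by term, that the chain-rule factorisation $\PP(X=x)=\prod_{p}\PP(X_{p}=x_{p}\mid X_{q}=x_{q},\,q<p)$ collapses to the three displayed products. Writing a pixel as a (row, column) pair and using the stated bijection between $\hat\gamma$ and the colour array, it suffices to show that for an interior pixel $(j,i+1)$ the conditional on all earlier pixels reduces to $\PP(X_{(j,i+1)}\mid X_{(j,i)},X_{(j-1,i)},X_{(j-1,i+1)})$, while the boundary pixels (first row, first column, and the top-left corner) give the stated lower-order conditionals. By Theorem~\ref{T:consistency} the restriction to the first $i$, respectively first $i+1$, columns is again a consistent polygonal field, so the column-transition $\PP(X^{(i+1)}\mid X^{(i)})$ (with $X^{(i)}$ the $i$-th column) can be studied inside the two-column field $\PP(X^{(i)},X^{(i+1)})$.

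First I would record the local structure coming from the explicit formula in the proof of Theorem~\ref{T:sameLaw}. Specialised to a lattice, every factor there is attached either to a node of ${\cal T}$ (a common corner of four pixels) or to a single inter-pixel segment: the no-birth factors, the vertex-type weights $\alpha_V,(k-1)\alpha_T,(k-1)\alpha_X$, and the $(1-\epsilon\pi_l)$ straight-crossing weights are manifestly node-local, while the seemingly non-local activities $\prod_{e\in E^{*}(\gamma)}\pi_{l[e]}$ and the edge count are localised through the identity
\[
\#\{\text{primary edges on }l\}=\#\{\text{edge segments on }l\}-\#\{\text{straight continuations on }l\},
\]
both right-hand counts being sums of segment- and node-indicators. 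Hence $\PP(X=x)\propto\prod_{n}\Psi_{n}$, a product of potentials over the unit squares of four pixels, so $X$ is a nearest-neighbour (unit-square-clique) Markov random field.

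Next I would extract the directionality from consistency and the spatial Markov property. Taking the separating curve to be the boundary of a column-prefix rectangle shows that the columns $X^{(1)},X^{(2)},\dots$ form a vector Markov chain; the same argument inside a single column, and inside the two-column rectangle on columns $i,i+1$ (legitimate since row-prefixes are again rectangles), shows that a single column is a scalar Markov chain down the rows and that the row-pairs $R_j=(X_{(j,i)},X_{(j,i+1)})$ form a Markov chain in $j$. Combining these, $\PP(X^{(i+1)}\mid X^{(i)})=\PP(R_1,\dots,R_n)/\PP(X^{(i)})$, and decomposing each $\PP(R_j\mid R_{j-1})$ by the chain rule as $\PP(X_{(j,i)}\mid X_{(j-1,i)},X_{(j-1,i+1)})\,\PP(X_{(j,i+1)}\mid X_{(j,i)},X_{(j-1,i)},X_{(j-1,i+1)})$, the second factor is already the desired four-pixel conditional while the single-column chain cancels the first, provided one may replace $\PP(X_{(j,i)}\mid X_{(j-1,i)},X_{(j-1,i+1)})$ by $\PP(X_{(j,i)}\mid X_{(j-1,i)})$. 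The $j=1$ row-pair gives the first-row conditional $\PP(X_{(1,i+1)}\mid X_{(1,i)})$, the single-column chain gives the first-column conditionals, and re-indexing $(j,i+1)\mapsto in+j$ matches the three products.

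The main obstacle is precisely the replacement just flagged, namely the conditional independence $X_{(j,i)}\perp X_{(j-1,i+1)}\mid X_{(j-1,i)}$ --- a pixel is independent of its upper-right neighbour given its upper neighbour. This is a Pickard-type condition on the unit-square potential $\Psi_n$ and is \emph{not} delivered by rectangle-consistency or by the curve-based spatial Markov property, both of which only screen along full grid lines; the relevant conditioning set is the L-shaped triple of upper-left, upper-right and left pixels, which is not a rectangle. I therefore expect to have to verify it directly from the explicit potentials of the second step, where the particular values of $\alpha_V,\alpha_T,\alpha_X$ and $\epsilon$ fixed in Definition~\ref{D:consistent} are what make the anti-diagonal decouple. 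A final, routine check is that at the boundary the absent upper and upper-left neighbours sit at nodes outside $D$, so their potentials are trivial and the four-pixel conditionals correctly degenerate to the two-pixel ones in the first row and first column.
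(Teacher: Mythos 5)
You have correctly identified the crux, but you have not proved it: the whole corollary rests on the Pickard-type conditional independence you flag in your final paragraph, namely that given the past in column-major order a pixel depends only on its left, upper-left and upper neighbours (in particular $X_{(j,i)}\perp X_{(j-1,i+1)}\mid X_{(j-1,i)}$). You correctly observe that this does \emph{not} follow from rectangle consistency or from the curve-based spatial Markov property (the L-shaped conditioning set is indeed inaccessible to those tools), and you then defer its verification to ``the explicit potentials'' without carrying it out. That verification is not routine: your localisation argument does deliver an unnormalised unit-square-clique Gibbs factorisation, but such an undirected factorisation never by itself yields a causal, normalised, pixel-by-pixel factorisation --- this mutual-compatibility property (Goutsias) is precisely what is special about the coupled choice of $\alpha_V,\alpha_T,\alpha_X,\epsilon$ in Definition~\ref{D:consistent}, and establishing it from the potentials would require summing out the ``future'' pixels, i.e.\ essentially redoing the partition-function computation. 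As it stands, the proposal therefore has a genuine gap at its central step, and everything before that step (the column-chain and row-pair-chain scaffolding) does not reduce the difficulty.

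The paper closes this gap by a much shorter route that your proposal circles around but never uses: the dynamic representation of Theorem~\ref{T:sameLaw}. Choosing the time axis so that the nodes of ${\cal T}$ are visited in column-major order, the particle dynamics (uniform initial colour, boundary births, interior births, collision rule (E3) and propagation rule (E4)) generate the colour array \emph{sequentially}, and by construction the behaviour at each node depends only on the colours and trajectories immediately prior in time, i.e.\ on exactly the three neighbouring pixels $u,v,w$. The directed factorisation is then immediate, and the proof consists of reading off the explicit conditional tables $\PP_{ij}(x\mid u,v,w)$ from the dynamics; equality in law of $\hat{\cal A}_{\Phi_D}$ and $\hat{\cal D}_D$ is exactly the statement that the field admits a causal local description. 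If you want to salvage your own route, the cleanest fix is the same appeal: replace the unverified conditional-independence claim by Theorem~\ref{T:sameLaw}, rather than attempting a direct computation with the clique potentials.
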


\begin{proof} 
Choose the time direction in the dynamic representation in such a way that
the chronological order of the nodes coincides with the column major order and
recall that the conditional behaviour at a node depends only on the colours
and trajectories immediately to its `left', i.e.\ immediately prior to it in 
time. For the first pixel, $\PP( X_1 = x_1 )$ is the probability that the
initial colour is $x_1$, which is $1/k$ as this colour is chosen uniformly
from the set $\{ 1, \dots, k \}$. Next, the probabilities of the pixel values
in the first column are derived from the boundary birth mechanism. Thus, 
for $x_i, x_{i-1} \in \{ 1, \dots, k \}$ and $i=2, \dots, n$,
\[
\PP( X_i = x_i \mid X_{i-1} = x_{i-1} ) =
\left\{ \begin{array}{ll} 
\frac{1}{k-1} \, \frac{ \pi_{l_{i-1}} }
{ 1 + \pi_{l_{i-1}} } & \mbox{ if } x_i \neq x_{i-1} ; \\
\frac{ 1 }
{ 1 + \pi_{l_{i-1}} } & \mbox{ if } x_i = x_{i-1} .
\end{array}
\right. 
\]
Similarly, for $x_{in+1}, x_{(i-1)n+1} \in \{ 1, \dots, k \}$ and $i= 1, \dots, m-1$,
\[
\PP( X_{in+1} = x_{in+1} \mid X_{(i-1)n+1} = x_{(i-1)n+1} ) =
\left\{ \begin{array}{ll} 
\frac{1}{k-1} \,\frac{ \pi_{l_{i+n-1}} }
{ 1 + \pi_{l_{n+i-1}} } & \mbox{ if } x_{in+1} \neq x_{(i-1)n+1}; \\
\frac{ 1 }
{ 1 + \pi_{l_{n+i-1}} } & \mbox{ if } x_{in+1} = x_{(i-1)n+1}. 
\end{array}
\right.
\]

Use the shorthand notation $\PP_{ij}(x \mid u, v, w)$ for
\[
\PP( X_{in+j} = x_{in+j} \mid X_{(i-1)n+j-1} = u ; X_{(i-1)n+j} = v; X_{in+j-1} = w ).
\]
Then, for $u\neq v\neq w\neq u$,
\[
\PP_{ij}(x \mid u, v, v) = \left\{ \begin{array}{ll} 
\alpha_V & \mbox{ if } x = v \\
\alpha_X / (k-1) & \mbox{ if } x \neq v
\end{array}
\right.
\]
by {\bf (E3a)}, and
\[
\PP_{ij}(x \mid u, v, w) = \left\{ \begin{array}{ll} 
\alpha_T & \mbox{ if } x \in \{ v, w \} \\
\alpha_X / (k-1) & \mbox{ if } x \not \in \{ v, w \}
\end{array}
\right.
\]
by {\bf (E3b)} regardless of $i,j$. Furthermore, the
interior birth mechanism implies that
\[
\PP_{ij}(x \mid u, u, u) = \left\{ \begin{array}{ll} 
\alpha_V \pi_{l_{j-1}} \pi_{l_{n+i-1} } / (k-1)^2 
& \mbox{ if } x \neq u \\
1 -  \alpha_V \pi_{l_{j-1}} \pi_{l_{n+i-1} } / ( k-1 ) 
& \mbox{ if } x = u 
\end{array}
\right.
\]
and finally {\bf (E4)} determines the probabilities
\[
\PP_{ij}(x \mid u, u, v) = \left\{ \begin{array}{ll} 
1 - \epsilon \pi_{l_{j-1}} & \mbox{ if } x = v \\
\alpha_V \pi_{l_{j-1}} / (k-1) & \mbox{ if } x = u \\
\alpha_T \pi_{l_{j-1}} / (k-1) & \mbox{ if } x \not \in \{ u, v \}
\end{array}
\right.
\]
and
\[
\PP_{ij}(x \mid u, v, u) = \left\{ \begin{array}{ll} 
1 - \epsilon \pi_{l_{n+i-1}} & \mbox{ if } x = v; \\
\alpha_V \pi_{l_{n+i-1}} / (k-1) & \mbox{ if } x = u; \\
\alpha_T \pi_{l_{n+i-1}} / (k-1) & \mbox{ if } x \not \in \{ u, v \} .
\end{array}
\right. 
\]
Collecting all terms completes the proof.
\end{proof}

Models for which the above factorisation holds have been dubbed
mutually compatible Gibbs random fields by Goutsias \cite{Gout89}.
In particular, the interior birth mechanism, as well as the collisions 
and path propagation described in the dynamical representation's 
{\bf (E3)--(E4)} correspond to the local transfer function in 
\cite{Gout89}, see also\cite{ChamIdieGous98}.

Note that we have chosen the time direction so as to conform to 
column major order. A fortiori, such models are Markov random
fields with the second order neighbourhood structure in which horizontally,
vertically, and diagonally adjacent pixels are neighbours. A similar 
factorisation holds for any choice of the time axis. Moreover, there 
is no need for $D$ to be a rectangle. Indeed, because of the assumptions 
on ${\cal{T}}$, every interior node is hit by exactly two segments that
are adjacent to three, not necessarily rectangular, pixels. See 
Figure~\ref{F:ukraine} for an example. The notation, however, becomes 
more cumbersome. For this reason, we prefer to use the graph-theoretical 
formulation with its neater formulae.


\section{Birth and death dynamics}
\label{S:sampler}

In this section, we use the dynamic representation of 
Section~\ref{S:dynamics} to propose dynamics that are reversible and 
leave the law of $\hat{\cal{A}}_{\Phi_D}$ invariant. These dynamics 
will serve as stepping stone for building Metropolis--Hastings dynamics 
for the general polygonal field models (\ref{PolyField}). For the 
two-colour case, algorithms inspired by dynamic representations can 
be found in \cite{KlusLiesSchr05,Schr05,SchrLies10,MatuSchr12}. In
that case, however, it is sufficient to focus on colour blind polygonal
configurations as the colouring is completely determined by the colour 
at a single point. For $k>2$, this is no longer the case and we have to 
explicitly incorporate colours in our dynamics. Moreover, particles do 
not necessarily die upon collision and the disagreement loop principle 
of Schreiber \cite{Schr05} no longer applies.

The basic continuous time dynamics we propose consist of adding and deleting 
particle birth sites. In order to fully explore the state space, recolouring 
will also be necessary, at some fixed rate $\tau>0$. We work with a constant 
death rate $1$. The birth rate at a boundary entry point ${\iin} (l, D)$ 
and a vacant internal node $n(l_1, l_2)$ are set to $\pi_l$ and 
\(
\alpha_V \pi_{l_1} \pi_{l_2} / ( k - 1 - \alpha_V \pi_{l_1} \pi_{l_2} )
\)
to satisfy the detailed balance equations
\[
\frac{\pi_l}{1+\pi_l} \times 1 = \left( 1 - \frac{\pi_l}{1+\pi_l} \right)
\times \mbox{birth rate}( {\iin} (l,D) )
\]
respectively
\[
\frac{\alpha_V}{k-1} \pi_{l_1} \pi_{l_2} \times 1 =  
\left( 1 - \frac{\alpha_V}{k-1} \pi_{l_1} \pi_{l_2} \right ) \times
\mbox{birth rate}( n(l_1, l_2) ).
\]
Recall that if $n(l_1, l_2)$  is hit by some previously born particle, the 
birth is discarded. 
For computational convenience, we shall keep track of the discarded
births during the dynamics.

In case of a birth update, the particle(s) emitted by the birth site are given 
trajectories in accordance with {\bf E4}. Upon collisions, {\bf E3} is 
invoked. Whenever possible, existing trajectories are re-used. A dual 
reasoning is applied to deaths. 

To make the above ideas precise, suppose that the current state is
$\hat \gamma$, understood here to {\em include\/} the knowledge of all discarded 
birth sites, which we modify by adding or deleting a (discarded) birth site 
to obtain $\hat\gamma^\prime$. We shall use the following segment classification:
\begin{description}
\item[plus] the segment does not lie on any edge of $\gamma$ but it does
lie on some edge of $\gamma^\prime$;
\item[minus] the segment lies on some edge of $\gamma$ but it does not
lie on any edge of $\gamma^\prime$;
\item[changed] the segment lies on some common edge of $\gamma$ and
$\gamma^\prime$ but the colour of at least one of its adjacent 
polygonal faces has changed, or the segment lies in the interior of
faces of $\hat \gamma$ and $\hat \gamma^\prime$ having different colours.
\end{description}

The dynamics are now as follows. In case of a birth at node $n(l_1, l_2)$ with
coordinates $(t,y)$, two plus segments arise along $l_1$ and $l_2$ forward
in time. In case of a boundary birth, a single plus segment is generated.
Similarly, in case of a death, one or two minus segments arise. We order the nodes
with first coordinate larger than $t$ in chronological order and update them
one at a time until some further time $t'>t$  for which the intersections
of $\hat \gamma$ and $\hat \gamma^\prime$  with the vertical line specified
by first coordinate $t'$ are identical. 

At each node, we first check whether the node is hit either by some segment marked
`plus' or `minus', or by some `changed' segment of $\gamma$. When using the word `hit' 
we shall always mean that the tail of the segment has a smaller time coordinate 
than the node at its head. We shall use the phrase `emanating segments' for those segments
whose tail is at the node and whose head has a larger time coordinate than 
the node. If no such segment exists, we need to check whether the node is a
non-discarded birth site. If it is, due to e.g.\ nesting, the colour just prior 
to the birth site may be different in $\hat \gamma$ and $\hat \gamma^\prime$. In 
this case, a new colour is chosen for the region to the right of the node
from those not equal to the colour just prior to the node. Otherwise, the
status quo is propagated. 

If the node is hit by two marked segments (`plus', `minus' or `changed' and
belonging to $\gamma$), or by one such segment and one that is an unchanged
common edge of $\gamma$ and $\gamma^\prime$ (which we label `old') a collision 
update is made as outlined in Table~\ref{T:collision}. 

\begin{table}
\begin{tabular}{|l|l|}
\hline
minus/minus & label emanating segments; \\
at vacant node & \\
\hline
minus/minus & implement birth by choosing new colour from those \\
at discarded birth site &  not equal to that prior to the discarded birth site; \\
& label emanating segments; \\
\hline
minus/old & invoke E4; \\
minus/plus & label emanating segments; \\
minus/changed & \\
\hline
plus/old & invoke E3; \\
plus/changed & label emanating segments; \\
old/changed & \\
plus/plus at & \\
\mbox{ } vacant node & \\
\hline
plus/plus & discard the birth;\\
at birth site & invoke E3 and label emanating segments; \\
\hline
changed/changed & check whether the colours above and below the face \\
& bounded by the two hitting segments agree 
 in $\hat \gamma$ and $\hat \gamma^\prime$; \\
& if so, do nothing; \\
& otherwise invoke E3 and label emanating segments. \\
\hline
\end{tabular}
\caption{Collision updates.}
\label{T:collision}
\end{table}
 
In the remaining case that the node is hit by a single segment marked either
`plus' or `minus' or by a segment of $\gamma$ labelled `changed', the path is 
updated as outlined in Table~\ref{T:path}.

\begin{table}
\begin{tabular}{|l|l|}
\hline
plus path & discard the birth; \\
at birth site & invoke E4; \\
& label emanating segments; \\
\hline
plus path & invoke E4; \\
at vacant site & label emanating segments; \\
\hline
minus path & implement birth by choosing new colour from those \\
at discarded birth site &  not equal to that prior to the discarded birth site; \\
& label emanating segments; \\
\hline
minus path & label emanating segments; \\
at vacant site & \\
\hline
changed path & in case the path splits, choose a new colour \\
& from those not equal to the colours above and below the path;  \\
& label emanating segments. \\
\hline
\end{tabular}
\caption{Path updates.}
\label{T:path}
\end{table}

An illustration is given in Figure~\ref{F:update}. The current node $n$ to 
be updated is in the middle of the panels. In $\hat \gamma$, the node is a 
birth site: no segments hit $n$ and there are two emanating segments. In the 
new polygonal configuration $\hat \gamma^\prime$, the node is hit by a segment 
separating the green from the blue face which is therefore labelled `plus'. 
According to Table~\ref{T:path}, in $\hat \gamma^\prime$, the birth gets 
discarded and we invoke {\bf E4}, say resulting in the decision to split. 
Hence, we must also choose a colour other than green or blue for the region to
the right, that is, forwards in time, for example red. The labels of the two 
emanating segments are `changed' for the one separating the blue and red faces, 
and `old' for the one forming the boundary between the red and green faces.

\begin{figure}[hbt]
\begin{center}
 \includegraphics[height=2.5cm,width=2.5cm]{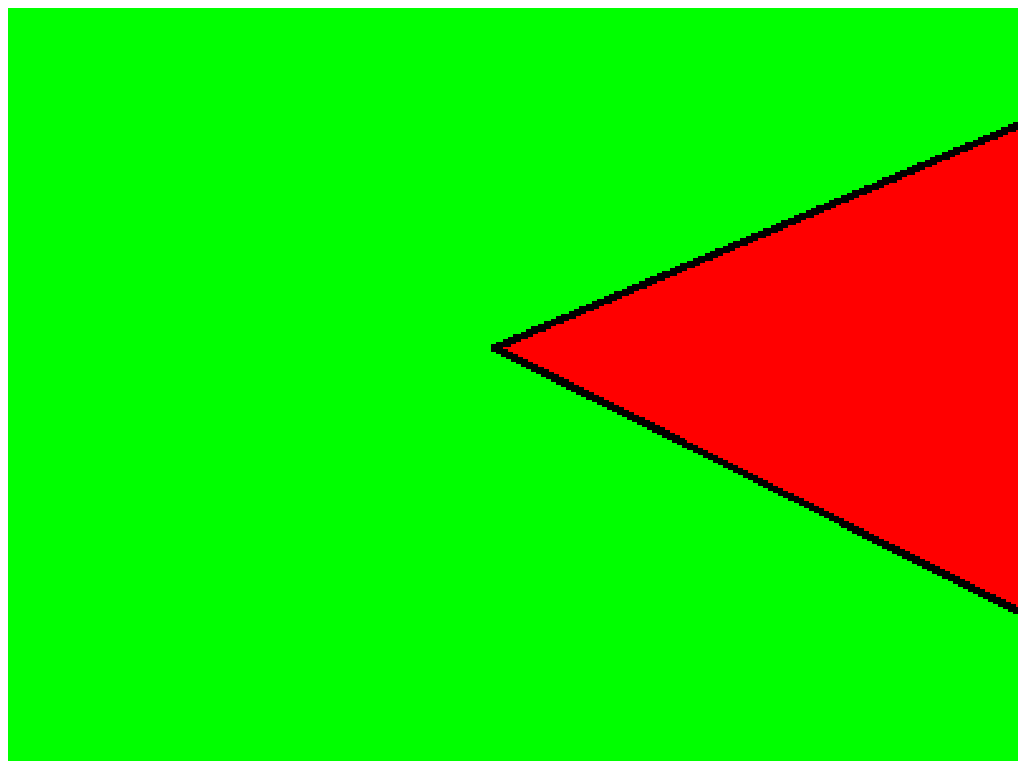}
 \includegraphics[height=2.5cm,width=2.5cm]{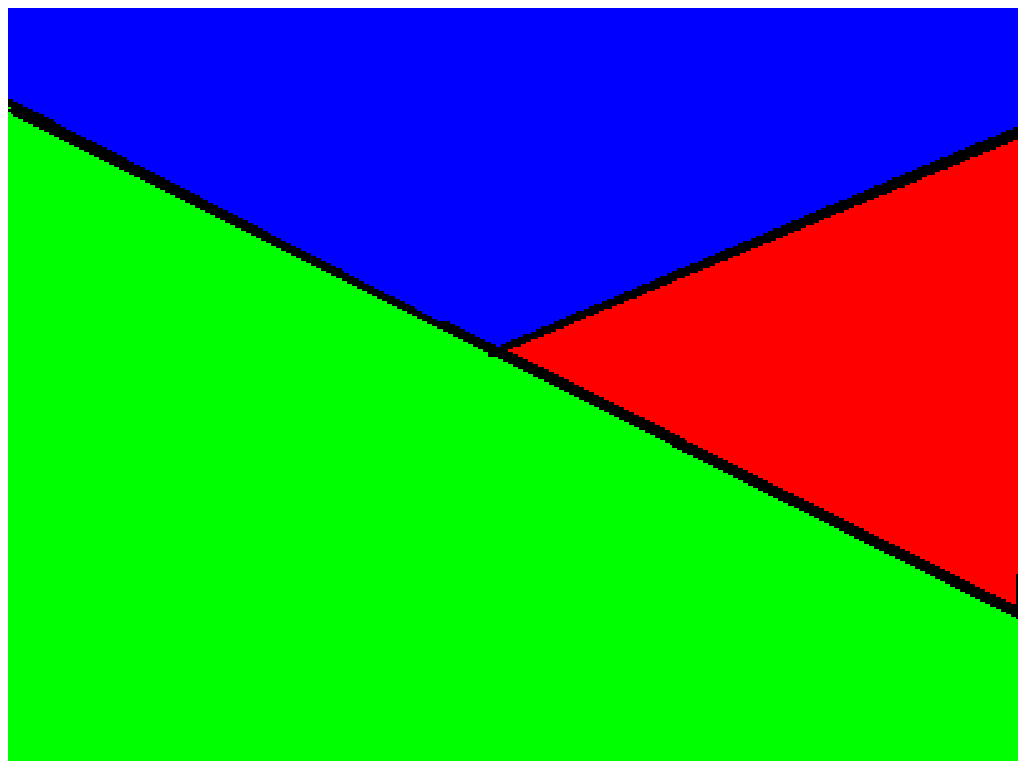}
\end{center}
\caption{Left: $\hat \gamma$. Right: $\hat \gamma^\prime$.}
\label{F:update}
\end{figure}

For recolouring, classic local colour switches are used, as detailed
in for example \cite{Wink03}. As for the two-colour case in \cite{SchrLies10}, 
we obtain the following result.

\begin{thm} \label{T:invariant}
The distribution of the consistent polygonal field $\hat{\cal{A}}_{\Phi_D}$
is the unique invariant probability distribution of the birth-death-recolour 
dynamics described above upon ignoring discarded birth attempts, to which 
they converge in total variation from any initial state $\hat \gamma \in 
\hat\Gamma_D({\cal{T}})$ for which $\PP( \hat{\cal{A}}_{\Phi_D} = 
\hat \gamma ) > 0$.
\end{thm}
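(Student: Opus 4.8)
The plan is to treat the birth--death--recolour mechanism as a continuous-time Markov jump process on the finite state space consisting of the configurations $\hat\gamma \in \hat\Gamma_D({\cal T})$ augmented with their bookkeeping of discarded birth sites, and to establish two facts: reversibility with respect to an augmented distribution $\tilde\pi$ whose marginal (after ignoring discarded births) is the law of $\hat{\cal A}_{\Phi_D}$, and irreducibility of this process on the states accessible from the support. Since an irreducible continuous-time Markov chain on a finite state space admits a unique stationary law to which it converges in total variation from every initial state, these two ingredients together yield the theorem, with the claimed convergence following by passing to the marginal.

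First I would record the transition rates explicitly: a death rate $1$ at each present birth site, the boundary and interior birth rates prescribed above, and the recolour rate $\tau$ for each admissible local colour switch. The central step is to verify detailed balance, namely $\tilde\pi(\hat\gamma)\, q(\hat\gamma,\hat\gamma') = \tilde\pi(\hat\gamma')\, q(\hat\gamma',\hat\gamma)$ for every pair of states linked by a single birth/death or recolour move. Here the explicit product expression for $\PP(\hat{\cal A}_{\Phi_D} = \hat\gamma)$ obtained in the proof of Theorem~\ref{T:sameLaw} is the key tool: adding a birth site and propagating its consequences forward through the update rules in Tables~\ref{T:collision} and~\ref{T:path} multiplies this product by precisely the local factors, namely the birth factor $\alpha_V \pi_{l_1} \pi_{l_2}/(k-1)$ or $\pi_l/(1+\pi_l)$, the propagation probabilities of {\bf E3}--{\bf E4}, and the uniform colour-choice factors. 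Matching the ratio of these realisation probabilities against the birth-to-death rate ratio then reduces detailed balance to the two defining equations for the birth rates stated above. Reversibility gives invariance of $\tilde\pi$, hence of its marginal, the law of $\hat{\cal A}_{\Phi_D}$.

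The delicate part, and the step I expect to be the main obstacle, is showing that the forward move (add a birth site and re-run the dynamics, reusing existing trajectories where possible) and its reverse (delete the corresponding birth site and re-propagate) are matched move by move, so that the realisation probabilities on the two sides of the detailed balance equation combine correctly. This is exactly the purpose of retaining the discarded birth sites in the state: it renders the birth and death operations symmetric even when a node is shadowed by an earlier trajectory. I would argue this by a case analysis over the entries of Tables~\ref{T:collision} and~\ref{T:path}, checking in each case that the colour-choice and path-update probabilities invoked going forward coincide with those used in the reverse direction, so that the only surviving asymmetry is the single birth/death factor already balanced by the choice of rates. The two-colour argument of \cite{SchrLies10} serves as a template; the additional effort lies in the explicit tracking of colours and in the new possibility that particles survive a collision.

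Finally I would establish irreducibility. From any state one may remove birth sites one at a time until reaching the empty skeleton carrying a single colour, a configuration of strictly positive probability; since each such death has a strictly positive reverse birth rate, every configuration in the support communicates with this ground state, while the recolour moves at rate $\tau>0$ connect all admissible colourings of a fixed skeleton. Hence the process restricted to the accessible states of positive probability is irreducible, its unique stationary law marginalises to that of $\hat{\cal A}_{\Phi_D}$, and convergence in total variation from every admissible initial state $\hat\gamma$ with $\PP(\hat{\cal A}_{\Phi_D} = \hat\gamma) > 0$ follows from the ergodic theorem for finite-state continuous-time Markov chains.
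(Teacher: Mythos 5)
Your proposal is correct, and its skeleton coincides with the paper's: both augment the state with discarded birth sites, both prove irreducibility by stripping all birth attempts down to a single-coloured ground state and rebuilding (plus recolour moves at rate $\tau$), and both conclude uniqueness and total-variation convergence from the finite-state ergodic theorem (the paper cites Theorem~20.1 of \cite{Levietal}; boundedness of the total rate, which the paper checks explicitly to set up an embedded transition matrix, is automatic here since the state space and move set are finite). Where you genuinely differ is the invariance step. You prove full reversibility of the augmented chain: detailed balance $\tilde\pi(\hat\gamma)q(\hat\gamma,\hat\gamma') = \tilde\pi(\hat\gamma')q(\hat\gamma',\hat\gamma)$ verified move by move against the product formula from Theorem~\ref{T:sameLaw}, via a case analysis over Tables~\ref{T:collision} and~\ref{T:path}. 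The paper's proof instead argues invariance in a factorised way: the birth/death rates were constructed to satisfy detailed balance with respect to the \emph{Bernoulli site probabilities} alone, the table updates preserve {\bf E1--E4} ``by design'' so that conditionally on the site configuration the trajectory law of Theorem~\ref{T:sameLaw} is maintained, and the colour switches are invariant by the standard argument in \cite{Wink03}; reversibility of the full chain is then stated only \emph{after} the proof, as a remark with a single sample computation (the split collision, where the rate factor $\alpha_T\pi_{l_j}/(k-1)$ forward against $1/(k-1)$ backward exactly matches the probability gain $\alpha_T\pi_{l_j}$), the remaining cases being left to the reader. The trade-off: your route is self-contained and makes explicit why retaining discarded births symmetrises birth and death, but it obliges you to actually carry out the case analysis you only sketch; the paper's route avoids that global bookkeeping, at the cost of compressing the same case-checking into the unproved claim that the tables preserve {\bf E1--E4}. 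Both are legitimate, and your identification of the table-by-table matching as the delicate point is exactly where the substance lies in either version.
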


\begin{proof}
The case $k=2$ was considered in \cite{SchrLies10}. Hence assume $k>2$.
The total transition rate is bounded from above by a positive constant.
Indeed, for each internal birth site $n(l_1,l_2)$, the rate
\[
\frac{ \alpha_V \pi_{l_1} \pi_{l_2} / (k-1) }{
1 -  \alpha_V \pi_{l_1} \pi_{l_2} / (k-1) } \leq \frac{ 1 / (k-1) }{ 1 - 1/(k-1) }
= \frac{1}{k-2} \leq 1.
\]
The death rate for a (discarded) birth at $n(l_1, l_2)$ is equal to $1$. 
Similarly, the birth rate $\pi_l$ at entry points $\iin(l, D) = \pi_l \leq 1$
and the death rate is again equal to $1$. Therefore, an upper bound is the
sum of $\tau$, the number of nodes and the number of lines hitting $D$. Thus,
our dynamics can be algorithmically generated by a Poisson clock of constant
rate, and an embedded Markov transition matrix that governs the transitions.
This transition matrix restricted to admissible coloured polygonal configurations
with positive probability under the putative invariant probability distribution
is irreducible, since any state can be reached from any other state by successively 
removing all birth attempts, choosing an appropriate initial colour and then 
building the target state by successively adding particles. Hence the dynamics 
constitute a finite state space irreducible Markov process and there exists a 
unique invariant probability distribution. See for example Theorem~20.1 in 
\cite{Levietal}. The same theorem also yields the converge in total variation. 
The invariance of the distribution including discarded birth attempts follows 
from the invariance of the Bernoulli birth site probabilities under the dynamics, 
the fact that the trajectories preserve {\bf E1--E4} by design, and the well-known 
invariance of the local colour switches \cite[Section~10.2]{Wink03}. Summing
over any discarded birth attempts completes the proof.
\end{proof}

In fact, the dynamics are reversible. Consider for example the collision 
of a positive path with a birth site as in Figure~\ref{F:update}. 
In this example, the transition rate is multiplied by 
$(k-2) \alpha_T \pi_{l_j} /(k-1)$ for the split and $1/(k-2)$ for the colour,
amounting to $\alpha_T \pi_{l_j} / (k-1)$ whereas the probability of 
$\hat \gamma$ {\it including the knowledge of discarded births} gains a 
factor $(k-1) \alpha_T \pi_{l_j} / (k-1) = \alpha_T \pi_{l_j}$ (the first
term $k-1$ to compensate for the choice of colour in $\hat \gamma$, the 
second for the split and colour in $\hat \gamma^\prime$). The reverse 
collision of a minus path with a discarded birth site yields a factor 
$1/ (k-1)$ for the rate. Thus, this type of collision is reversible.  
Similar calculations can be made for the other types of collision and are 
left to the reader. 

For general polygonal field models (\ref{PolyField}) with a Hamiltonian that
is the sum of (\ref{E:consistent}) and some other term ${\cal{H}}_D$,
we propose a Metropolis--Hastings algorithm. The algorithm has the same birth, 
death and recolour rates as the dynamics presented in Section~\ref{S:sampler}.
The difference is that a new state $\hat \gamma^\prime$ is accepted with 
probability
\begin{equation}
\label{e:MH}
\min \left( 1,
\exp\left[ 
   {\cal{H}}_D(\hat \gamma) - {\cal{H}}_D(\hat \gamma^\prime)
\right] \right),
\end{equation}
whereas the old state is kept with the complementary probability. An 
example of ${\cal{H}}_D$ will be presented in Section~\ref{S:application}.

\begin{thm} \label{T:MH}
Let ${\cal{H}}_D: \hat{\Gamma}_D({\cal T}) \mapsto {\mathbb R}$ be finite. Then 
the distribution of the polygonal field $\hat{\cal{A}}_{\Phi_D + {\cal{H}}_D}$ is 
the unique invariant probability distribution of the Metropolis--Hastings 
dynamics. Moreover, these dynamics converge in total variation to 
$\hat{\cal{A}}_{\Phi_D + {\cal{H}}_D}$  from any initial state 
$\hat \gamma \in \hat\Gamma_D({\cal{T}})$ for which
$\PP( \hat{\cal{A}}_{\Phi_D} = \hat \gamma ) > 0$.
\end{thm}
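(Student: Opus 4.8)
The plan is to reduce the statement to the reversibility of the base birth--death--recolour dynamics established for Theorem~\ref{T:invariant}, combined with the standard Metropolis--Hastings correction. I would work throughout on the augmented state space in which each coloured configuration $\hat\gamma$ is recorded together with its discarded birth attempts, exactly as in the proof of Theorem~\ref{T:invariant}. Write $q(s,s')$ for the base transition rate between augmented states $s=(\hat\gamma,\cdot)$ and $s'=(\hat\gamma',\cdot)$, and let $\tilde\mu$ be the augmented invariant law, which factorises as the law of $\hat{\cal A}_{\Phi_D}$ times the independent Bernoulli probabilities of the discarded birth sites. The key input, furnished by the reversibility discussion following Theorem~\ref{T:invariant}, is that the base dynamics satisfy detailed balance $\tilde\mu(s)\,q(s,s')=\tilde\mu(s')\,q(s',s)$.

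The Metropolis--Hastings dynamics replace $q(s,s')$ by $\tilde q(s,s')=q(s,s')\,\min\bigl(1,\exp[{\cal H}_D(\hat\gamma)-{\cal H}_D(\hat\gamma')]\bigr)$, where the acceptance factor depends on the two states only through their coloured graphs and not through their discarded births. I would then set $\tilde\nu(s)\propto\exp(-{\cal H}_D(\hat\gamma))\,\tilde\mu(s)$ and verify detailed balance directly: multiplying out, $\tilde\nu(s)\,\tilde q(s,s')$ equals $\tilde\mu(s)\,q(s,s')\,\min\bigl(\exp(-{\cal H}_D(\hat\gamma)),\exp(-{\cal H}_D(\hat\gamma'))\bigr)$, and the symmetry of the minimum together with the base detailed balance renders this symmetric in $s$ and $s'$. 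Hence $\tilde\nu$ is reversible, and therefore invariant, for the Metropolis--Hastings chain. Summing over the discarded-birth coordinate, over which $\exp(-{\cal H}_D(\hat\gamma))$ is constant for fixed $\hat\gamma$, collapses $\tilde\nu$ to a law proportional to $\exp(-{\cal H}_D(\hat\gamma))\,\PP(\hat{\cal A}_{\Phi_D}=\hat\gamma)$, which is precisely the law of $\hat{\cal A}_{\Phi_D+{\cal H}_D}$ by (\ref{PolyField}).

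For uniqueness and convergence I would reuse the machinery of Theorem~\ref{T:invariant}. Because ${\cal H}_D$ is finite, every acceptance probability is strictly positive, so the Metropolis--Hastings chain has the same connectivity as the base chain, and the irreducibility argument (strip all births to reach a bare configuration, fix the initial colour, then rebuild the target by successive additions) carries over verbatim. Finiteness of ${\cal H}_D$ also makes $\exp(-{\cal H}_D)$ a strictly positive factor, so $\PP(\hat{\cal A}_{\Phi_D}=\hat\gamma)>0$ if and only if $\PP(\hat{\cal A}_{\Phi_D+{\cal H}_D}=\hat\gamma)>0$, whence the admissible initial states coincide. As before, the bounded total rate lets one realise the process through a constant-rate Poisson clock driving an embedded irreducible transition matrix on a finite state space, so Theorem~20.1 of \cite{Levietal} delivers both the uniqueness of the invariant law and the convergence in total variation.

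The step I expect to require the most care is the bookkeeping on the augmented space: one must confirm that the acceptance factor genuinely depends on $\hat\gamma$ alone, so that it passes unchanged through the marginalisation over discarded births and that the base detailed balance can be invoked at the augmented level for every move type. This is immediate from ${\cal H}_D$ being defined on $\hat\Gamma_D({\cal T})$, but it is the one point at which the argument would break were the modification allowed to depend on the discarded births.
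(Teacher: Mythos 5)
Your proposal is correct and takes essentially the same route as the paper: propose moves with the birth--death--recolour dynamics of Theorem~\ref{T:invariant}, accept with probability (\ref{e:MH}), reuse the irreducibility argument (finiteness of ${\cal H}_D$ keeps every acceptance probability strictly positive, so connectivity is unchanged), and apply Theorem~20.1 of \cite{Levietal} on the finite state space to obtain uniqueness and total-variation convergence. The only difference is one of explicitness, and it works in your favour: the paper's proof merely asserts that invariance of the base dynamics together with the modification (\ref{e:MH}) leaves $\hat{\cal A}_{\Phi_D + {\cal H}_D}$ invariant, which strictly speaking requires the \emph{reversibility} (not mere invariance) of the base chain on the augmented state space --- a fact the paper establishes only in the discussion following Theorem~\ref{T:invariant}, and only for one collision type --- whereas your detailed-balance computation with the symmetric minimum and the marginalisation over discarded births spell out precisely the step the paper leaves implicit.
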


\begin{proof}
By the assumption on the Hamiltonian and arguments analogous to those in the proof
of Theorem~\ref{T:invariant}, the embedded Markov transition matrix that governs the 
transitions is irreducible. Hence the dynamics constitute a finite state space 
irreducible Markov process and there exists a unique invariant probability distribution, 
cf.\ Theorem~20.1 in \cite{Levietal}, to which they converge in total variation. By 
Theorem~\ref{T:invariant}, the birth-death-recolour dynamics leave the distribution of 
$\hat{\cal{A}}_{\Phi_D}$ invariant. The modification by the acceptance probabilities 
(\ref{e:MH}) yields that the target distribution $\hat{\cal{A}}_{\Phi_D + {\cal{H}}}$ is 
left invariant by the Metropolis--Hastings dynamics. 
\end{proof}

\section{Application to linear network extraction}
\label{S:application}

The goal of this section is to apply the model presented in Section~\ref{S:definition} 
to the extraction of a network of tracks in between crop fields from image data. The 
left-most panel in Figure~\ref{F:ukraine}, obtained from the collection of publicly 
released SAR (Synthetic Aperture Radar) images at the NASA/JPL web site
{\tt http://southport.jpl.nasa.gov} shows an agricultural region in Ukraine. A pattern 
of fields separated by tracks is visible, broken by some hamlets. The image was previously 
analysed by Stoica et al.\ \cite{Stoietal02} by means of a Markov line segment process. 

\begin{figure}[htb]
\begin{center}
  \fbox{\includegraphics[height=65mm]{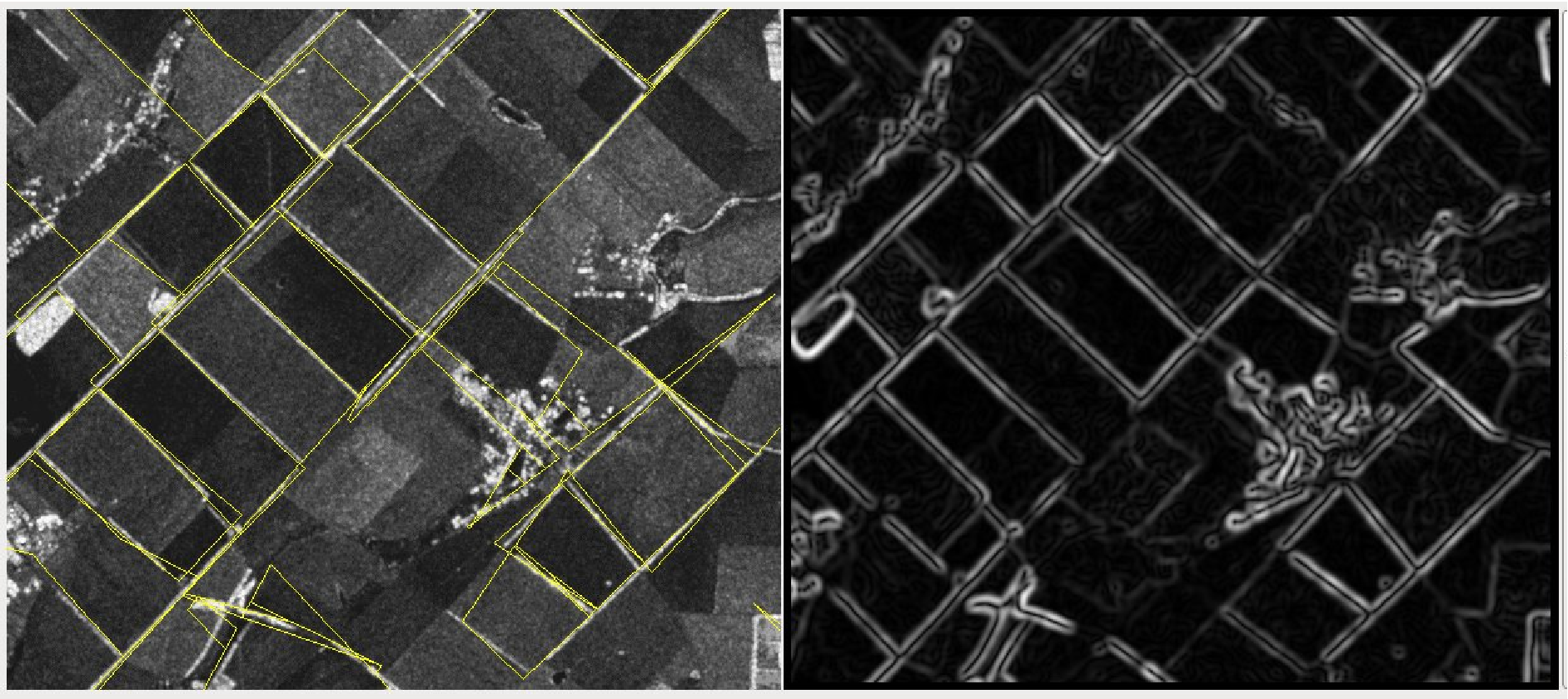}}

  \fbox{\includegraphics[height=65mm]{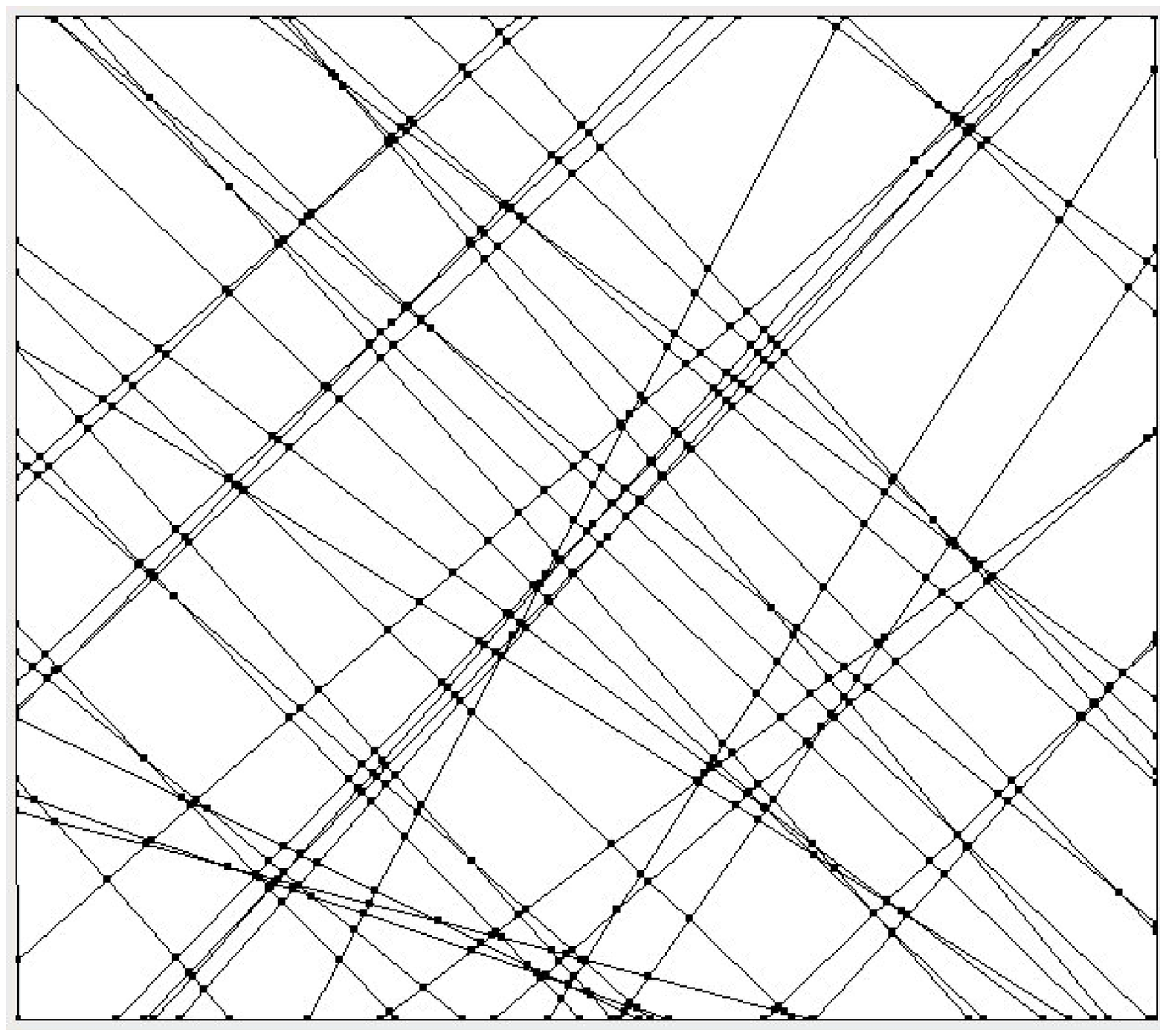}}
\end{center}
\caption{Polygonal configuration (in yellow) overlaid on a SAR image of 
fields in rural Ukraine (top left panel), the corresponding edge map (top 
right panel) and a regular linear tessellation extracted from the Hough
accumulation array (bottom row).}
\label{F:ukraine}
\end{figure}

Note that the tracks that run between adjacent fields show up in the image
as whitish lines against the darker fields. Thus, a track is associated with
a high image gradient. To find a suitable family ${\cal{T}}$ of straight lines, 
we therefore begin by computing the gradient of the data image after 
convolution with a radially symmetric Gaussian kernel with standard deviation 
$\sigma = 3$ to suppress noise. The right-most panel in Figure~\ref{F:ukraine} 
shows the gradient length thus obtained. We then compute the Hough transform 
\cite{DudaHart72,Houg62} in an $80\times 80$ accumulation array and select the 
$8$ lines corresponding to the bins with the largest number of accumulated votes. 
This collection is augmented by lines parametrised by the largest local extrema 
in the accumulation array to yield the $42$ lines shown in the bottom panel of 
Figure~\ref{F:ukraine}. The resulting set  is finite and contains no three lines that 
intersect in a common point, in other words, it is a regular linear tessellation. 
Regarding the choice of $\alpha_V$, a comparison of the data with the simulations 
shown in Figure~\ref{F:Arak} suggests $\alpha_V = 1/2$. Since the lines are
selected on the basis of a high gradient value, there seems to be no particular 
reason to favour one line over another, so we may set the line activity to a 
constant value.

To quantify how well a polygonal configuration fits the data, recall that
an edge should be present when there is a large gradient, and absent when the
gradient is small. This desirable property is captured by the Hamiltonian
\begin{equation}
\label{E:flux}
{\cal{H}}_D(\hat \gamma) = - \beta \sum_{ e \in E(\gamma) } \left[
f(e) - c(e) \right],
\end{equation}
where $f(e)$ is the integrated absolute gradient flux along edge $e$, $c(e)$ a 
threshold to discourage spurious edges, and $\beta > 0$. We take $c(e)$ proportional
to the number of segments along the edge with proportionality constant $c>0$. For
this choice, (\ref{E:flux}), being a sum of segment contributions, is local in 
nature, which is convenient from a computational perspective.

To find an optimal polygonal configuration, we use simulated annealing 
applied to the Metropolis--Hastings algorithm of Section~\ref{S:dynamics} for
$\hat{\cal{A}}_{\Phi_D + {\cal{H}}_D}$  with $\Phi_D$ given by (\ref{E:consistent})
and ${\cal{H}}_D$ defined by (\ref{E:flux}) and recolour rate $100$. Starting at 
temperature $1/\beta = 10$, the inverse temperature parameter $\beta$ is slowly 
increased to $100$ according to a geometric cooling schedule. 
The result for $c=2$, line activity $1/2$ and $k=4$ is shown in Figure~\ref{F:ukraine}.
There are no false negatives; a few false positives occur near the hamlets and are 
connected to the track network. The precision of the line placement is clearly linked 
to the precision of the underlying regular linear tessellation.

\section{Summary and discussion}

In this paper, a new class of consistent random fields was introduced whose 
realisations are coloured mosaics with not necessarily convex polygonal 
tiles. The vertices of the polygonal tiles may have degrees two, three or
four. The construction, inspired by the Arak--Clifford--Surgailis polygonal 
Markov fields in the continuum \cite{ArakSurg89,ArakSurg91,ArakClifSurg93}, 
extends our previous construction \cite{SchrLies10} of consistent polygonal 
random fields with tile vertices of degree two only. The latter case is 
substantially simpler due to the fact that interactions are restricted to a 
hard core constraint only, and, moreover, for a given realisation, there are
only two equally likely admissible colourings. We developed a dynamic representation 
for consistent multi-colour polygonal fields, which was used to prove the basic 
properties of the model including consistency and an explicit expression for the 
partition function. Local and spatial Markov properties were also considered. 
The dynamic representation provided the foundation on which to build 
Metropolis--Hastings style samplers for Gibbsian modifications of these fields. 
Finally, we applied the model to the detection of linear networks in rural scenes. 
A modification of our models would consist in ascribing activity parameters
to segments rather than lines, cf.\ \cite{MatuSchr12}. A disadvantage seems
to be that the dynamic representation would depend on the direction of time,
in other words, the model would be anisotropic.

To conclude, we should stress that, although the model was inspired by 
those of Arak et al.\,~there are striking differences inherent to the 
discrete set-up. Notably, in the continuum collinear edges are not allowed,
in the discrete set-up they are. Indeed, if one were to forbid collinear 
edges, this would lead to a forbidden line whose influence would be felt 
at arbitrarily large distance from its single edge, hence ruling out any 
meaningful Markovianity. This is not true in the continuum as the dynamic
representation there ensures that collinear edges occur with probability zero.
As a consequence, our consistent random fields are not Arak--Surgailis fields
conditional on having their edges along the lines of ${\cal{T}}$. More
fruitfully, our models provide a graph-theoretical interpretation of mutually
compatible Gibbs random fields that inspires novel simulation algorithms as 
an alternative to the usual local tile updating schemes \cite{Wink03}.

\section*{Acknowledgements}
This research was supported by The Netherlands Organisation for Scientific
Research NWO (613.000.809). The author is grateful to T. Schreiber for a
pleasant and interesting collaboration on polygonal Markov fields and their 
applications that was cut short by his untimely death, to H.\ Noot for 
programming assistance, to K.\ Kayabol for reading a preliminary draft and
to the anonymous referees for their careful reading and helpful suggestions.

\end{document}